\newcommand{\techRep}{true} 
\newcommand{\iftechrep}{\ifthenelse{\equal{\techRep}{true}}}
\declaretheorem[sibling=theorem,style=theorem]{proposition}
\newenvironment{qlemma}[1]{%
{\par\mbox{}\newline\noindent\bf Lemma #1.}%
\begin{itshape}%
}{%
\end{itshape}%
}
\newenvironment{qcorollary}[1]{%
{\mbox{}\newline\noindent\bf Corollary #1.}
\begin{itshape}%
}{%
\end{itshape}%
}
\definecolor{orange3}{rgb}{1.0,0.2538,0.1681}
\definecolor{blau}{rgb}{0,.39608,0.74118}
\definecolor{rot}{rgb}{0.79216,.12941,0.24706}
\definecolor{raducolor}{rgb}{.5,.3,.1}
\renewcommand\vec[1]{{\bf #1}}
\newcommand{\g}{\vec{g}}
\title{Proving the Herman-Protocol Conjecture}
\author[1]{Maria Bruna}
\author[2]{Radu Grigore}
\author[3]{Stefan Kiefer}
\author[3]{Jo\"el Ouaknine}
\author[3]{James Worrell}
\affil[1]{Mathematical Institute, University of Oxford}
\affil[2]{School of Computing, University of Kent}
\affil[3]{Department of Computer Science, University of Oxford}
\authorrunning{M. Bruna and R. Grigore and S. Kiefer and J. Ouaknine and J. Worrell} 
\subjclass{F.1.2 Modes of Computation}
\keywords{randomized protocols, self-stabilization, Lyapunov function, expected time}
\begin{document}
\maketitle

\newcommand{\E}{\mathbb{E}}
\newcommand{\Ex}[1]{\mathbb{E}\left[#1\right]}
\newcommand{\N}{\mathbb{N}}
\renewcommand{\P}[1]{{\cal P}\left(#1\right)}
\newcommand{\R}{\mathbb{R}}
\newcommand{\tim}{T}

\begin{abstract}
  Herman's self-stabilization algorithm, introduced 25 years ago, is a
  well-studied synchronous randomized protocol for enabling a ring of
  $N$ processes collectively holding any odd number of tokens to reach
  a stable state in which a single token remains. Determining the
  worst-case expected time to stabilization is the central outstanding
  open problem about this protocol. It is known that there is a
  constant $h$ such that any initial configuration has expected
  stabilization time at most $h N^2$. Ten years ago, McIver and Morgan
  established a lower bound of $4/27 \approx 0.148$ for $h$, achieved
  with three equally-spaced tokens, and conjectured this to be the
  optimal value of $h$. A series of papers over the last decade
  gradually reduced the upper bound on $h$, with the present record
  (achieved in 2014) standing at approximately $0.156$. In this
  paper, we prove McIver and Morgan's conjecture and establish that
  $h = 4/27$ is indeed optimal.
\end{abstract}



\section{Introduction} \label{sec-prelim}

The notion of \emph{self-stabilization} was introduced in a seminal
paper of Dijkstra~\cite{D74}, and rose to prominence a decade later,
following (among others) an invited talk of Lamport during which he
pointed out that ``self-stabilization [is] a very important concept in
fault tolerance''~\cite{Lam84}. Both self-stabilization and fault
tolerance have since become central themes in distributed computing
(see, e.g.,~\cite{Dolev}), as recently witnessed by the award of the
2015 Edsger W. Dijkstra Prize in Distributed Computing to Michael
Ben-Or and Michael Rabin for ``starting the field of fault-tolerant
randomized distributed algorithms'' in the early 1980s.

In this paper, we examine an early self-stabilization algorithm known
as Herman's Protocol~\cite{Herman90}, whose exact mathematical
analysis has proven remarkably challenging over the two-and-a-half
decades since its inception. This algorithm considers a ring of $N$
processes (or nodes), where each process either holds or doesn't hold
a token. Starting from any initial configuration of $K$ tokens, where
$K$ is required to be odd, Herman's algorithm proceeds as follows: at
each time step, every process that holds a token either keeps it or
passes it to its clockwise neighbor with probability~$1/2$. All
updates happen synchronously, and if a process finds itself with two
tokens (having simultaneously kept one and received one from its
counterclockwise neighbor) then both tokens are annihilated. It is
straightforward to see that, starting from an odd number of tokens and
following this procedure, almost surely only one token eventually
remains, at which point the ring is said to have \emph{stabilized}.

Herman's original paper~\cite{Herman90} presents the algorithm in
a form amenable to implementation. Each process possesses a bit, which the
process can read and write. Each process can also read the bit of its
counterclockwise neighbor.  In this representation, having the same
bit as one's counterclockwise neighbor is interpreted as having a
token. At each time step, each process compares its bit with the bit
of its counterclockwise neighbor; if the bits differ, the process
keeps its bit, whereas if the bits are the same, the process flips its
bit with probability~$1/2$ and keeps it with
probability~$1/2$.
It is straightforward to verify that the bit-flipping version is
an implementation of the token-passing version:
in particular, a process
flipping its bit corresponds to passing its token to its clockwise
neighbor.  If the number of processes is odd, by construction this
bit representation forces the number of tokens to be odd as well,
which justifies the assumption that $K$, the number of tokens, is
always odd. In this paper we make no assumption about the parity of the
number of \emph{processes}, as we abstract from the bit
implementation, and simply assume that the number of tokens is odd
throughout.

%

Herman's original paper~\cite{Herman90} showed that the expected time
(number of synchronous steps) to stabilization is $O(N^2 \log N)$.
The same paper also mentions an improved upper bound of $O(N^2)$ due
to Dolev, Israeli, and Moran, without giving a proof or a further
reference.  In 2004, Fribourg et al.~\cite{Fribourg05} established an
upper bound of $2N^2$, and the following year Nakata~\cite{Nakata05}
gave a tighter upper bound of $0.936N^2$ and exhibited an initial
configuration with expected stablization time $\Omega(N^2)$.  At the
same time and independently, McIver and Morgan showed
in~\cite{McIverMorgan} that the initial configuration consisting of
three equally-spaced tokens has an expected stabilization time of
exactly $\frac{4}{27}N^2$, and conjectured that this value is an upper
bound on the expected time to stabilization starting from \emph{any}
initial configuration with \emph{any} (odd) number of tokens. The
conjecture is intriguing since increasing the initial number of tokens
might be thought to lengthen the expected time to stabilization, due
to the larger number of collisions required to achieve stabilization.

Nevertheless, McIver and Morgan's Herman-Protocol Conjecture is
supported by considerable amount of experimental
evidence~\cite{PrismHerman}, and in the intervening years a series of
papers have gradually reduced the upper bound on the constant $h$ such
that stabilization from any initial configuration takes expected time
at most $hN^2$: upper bounds of approximately $0.64$, $0.521$,
$0.167$, and $0.156$ are given respectively in
\cite{KMO11,FZ13,FengZhang14,Haslegrave14}, the last one provided last
year by Haslegrave, and coming relatively close to McIver and Morgan's
lower bound of $4/27 \approx 0.148$.

In this paper, we prove McIver and Morgan's conjecture and establish
that $h = 4/27$ is indeed optimal. Writing $T_z$ for the stabilization
time starting from an initial configuration~$z$, we seek to prove that
$\E T_z \leq \frac{4}{27}N^2$. To this end, one of the key ideas is to
work with a Lyapunov function $V(z)$ in lieu of the (more complicated)
function $\E T_z$.
The domain of the function~$V$ is \emph{continuous}:
a domain element describes a configuration in terms of the distances between adjacent tokens.
Combinatorial arguments exploiting the highly
symmetrical structure of $V(z)$ enable us to establish that, for an
arbitrary configuration~$z$, we have $\E T_z \leq V(z)$, with equality
holding for all three-token configurations.
Finally, in what constitutes the most technically challenging part of this paper, 
we combine induction on the number of tokens with analytical techniques to show that $V$ is bounded by $\frac{4}{27}N^2$.
Taken together,
we obtain $\E T_z \le \frac{4}{27}N^2$, entailing the Herman-Protocol Conjecture.

The case of there being an \emph{even} number~$K$ of tokens is equally
natural from a mathematical point of view, although it does not
correspond to a concrete bit-flipping protocol.  It was established
in~\cite{FengZhang14} that the worst-case configuration in this
variant is the equidistant \emph{two}-token configuration, with an
expected stabilization time of~$\frac12 N^2$; the analysis underlying
that result is considerably simpler than what is required in case the
number of tokens is odd, as in the present paper.


Herman's protocol is also related to the notion of \emph{coalescing random
  walks}~\cite{Arratia-Coal-and-Anni81,Cox-Coal-89,AF02}.
There, one considers multiple independent random walks on~$\mathbb{Z}^d$ (or on the vertices of a connected graph).
When two walks meet, they coalesce into a new random walk. 
A protocol for self-stabilizing mutual exclusion based on such random walks was proposed in~\cite{IsraeliJalfon}.
The expected coalescence time was studied in~\cite{Coppersmith93,Oliveira-12,Cooper-Coalescing}.

It is interesting to note that Herman's ring is closely related to
widely-studied models of random walks and Brownian motion in
statistical physics. Observe that by a simple modification of the
formalism, one may equivalently view Herman's model as a ring
in which tokens randomly move in discrete step in \emph{any}
direction, with pairwise collisions leading to annihilation; this
precisely corresponds to Fisher's \emph{vicious drunks}
model~\cite{Fis84} (with periodic boundary conditions). Similar models
have been studied in chemical physics~\cite{Gen68,Balding88,War91} and
statistical mechanics~\cite{GP06,RS11,SMC13}, among others.

The rest of the paper is organized as follows.  In
Section~\ref{sec-previous} we review previous results in the
literature that are relevant to our proof.  In
Section~\ref{sec-overview} we outline the structure of our proof,
identifying two key lemmas, Lemma~\ref{lem-incr-f5} and
Lemma~\ref{lem-f-max-induction}.  Those are proved in
\iftechrep{Appendix~\ref{sec:incr-f5-proof}}{\cite{BGKOW15-techrep}} and
Section~\ref{sec-f-max-induction}, respectively.

Another solution of the conjecture, using different techniques, is independently shown in~\cite{Endre}.

\section{Relevant Previous Results} \label{sec-previous} 

For the rest of the paper we fix the number~$N$ of processes.  We
assume that the number $K$ of tokens is odd, and both $N$ and $K$ are
at least $3$.

Processes are numbered from $1$ to~$N$, clockwise, according to their
position in the ring.  A configuration with $K$ tokens is formalized
as a function $z : \{1, \ldots, K\} \to \{1, \ldots, N\}$ with
$z(1) < \cdots < z(K)$, where the $i$th token
($i \in \{1, \ldots, K\}$) is held by the processor with the
number~$z(i)$.  We write $Z_K$ for the set of configurations with $K$
tokens, and $Z$ for the set of all possible configurations, that is,
$Z = Z_1 \cup Z_3 \cup Z_5 \cup \ldots$ 

For a fixed initial configuration~$z = z_0$ we write $(z_t)_{t \ge 0}$
for the stochastic process of configurations emanating from~$z$.  The
\emph{stabilization time} $\tim_z$ is the smallest $t \ge 0$ such that
$z_t\in Z_1$, i.e., the time until only one token is left.
In this paper we focus on the expectation~$\E \tim_z$.
It is shown in~\cite{McIverMorgan} that if $N$ is odd and a multiple of~$3$,
then there is a configuration~$z \in Z_3$ 
(with the 3 tokens maximally separated in an equilateral triangle)
such that $\E \tim_z = \frac{4}{27} N^2$.

In this paper we show:
\begin{theorem} \label{thm-main}
We have $\E \tim_z \le \frac{4}{27} N^2$ for all $z \in Z$.
\end{theorem}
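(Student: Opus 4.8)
The plan is to interpolate between the discrete quantity $\E\tim_z$ and a
tractable analytic object by introducing a Lyapunov function $V$ on a
\emph{continuous} domain that records only the (normalized) gaps between
consecutive tokens. Concretely, a configuration $z\in Z_K$ with gaps
$g_1,\dots,g_K\ge 0$ summing to $N$ is mapped to a point of the scaled simplex,
and $V$ is defined so that (i) it is symmetric under cyclic rotation and
reflection of the gap vector, (ii) it satisfies the natural
\emph{supermartingale-type} inequality coming from one synchronous step of
Herman's protocol — each token independently stays or advances, so each gap
$g_i$ evolves like a sum of $\pm$ increments and, crucially, annihilations
merge three consecutive gaps into one — and (iii) it agrees with $\E\tim_z$ on
three-token configurations, where the exact value $\tfrac{4}{27}N^2$ of
\cite{McIverMorgan} is available. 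The first half of the argument, establishing
$\E\tim_z\le V(z)$ with equality on $Z_3$, is the role of
Lemma~\ref{lem-incr-f5}: one checks that $V$ is a valid Lyapunov function for
the chain, i.e.\ that the expected one-step decrease of $V$ is at least $1$
(the per-step cost) away from $Z_1$, using the combinatorial symmetry of $V$
to reduce the verification to finitely many ``local'' gap interactions. By the
standard optional-stopping / Lyapunov argument this yields $\E\tim_z\le V(z)$.

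The second half — bounding $V$ itself — is where the real work lies, and it is
the content of Lemma~\ref{lem-f-max-induction}: one must show
$V(z)\le\tfrac{4}{27}N^2$ for every $z\in Z$. Here I would induct on the number
$K$ of tokens. The base case $K=3$ is exactly the McIver–Morgan computation,
so equality holds and the maximum is attained at the equilateral triangle. For
the inductive step, I would exploit the structure of $V$ as (essentially) a
sum or supremum over the gap coordinates of a fixed one-dimensional profile,
together with a ``smoothing'' observation: given a configuration with $K\ge 5$
tokens, one identifies a pair of nearby tokens whose removal (or merging of
their gap into a neighbour) produces a $(K-2)$-token configuration with the
same $N$, and one argues that this operation does not decrease $V$ up to the
target bound — intuitively, concentrating tokens is never worse than spreading
them, which is the content-heavy analytic claim. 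Optimizing $V$ over the
continuous simplex then reduces, by convexity/concavity in each gap direction
and the symmetry group action, to the boundary or to the fully symmetric
point, and a direct calculation with the one-dimensional profile caps the
value at $\tfrac{4}{27}N^2$.

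Chaining the two lemmas gives, for arbitrary $z\in Z$,
\[
  \E\tim_z \;\le\; V(z) \;\le\; \tfrac{4}{27}N^2,
\]
which is precisely Theorem~\ref{thm-main}; moreover the equidistant
three-token configuration shows this is tight, recovering the lower bound
of~\cite{McIverMorgan}.

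I expect the main obstacle to be the inductive/analytic bound on $V$
(Lemma~\ref{lem-f-max-induction}), specifically the reduction showing that the
worst case always ``collapses'' toward a three-token configuration. The
difficulty is twofold: first, one must find the right merging operation that
is simultaneously compatible with the definition of $V$ and monotone with
respect to it; second, after reducing to the continuous optimization, $V$ is
neither convex nor concave globally, so locating its maximum requires a careful
case analysis of critical points on faces of the simplex and a non-trivial
one-variable estimate. The symmetry of $V$ under the dihedral action on gaps
is the main tool that keeps this case analysis finite, but verifying the key
monotonicity inequality is where most of the technical effort will go.
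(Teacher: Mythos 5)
Your top-level skeleton is the same as the paper's: the theorem is obtained by chaining a Lyapunov bound $\E \tim_z \le V(z)$ (Lemma~\ref{lem-drop-f}, which rests on Lemma~\ref{lem-incr-f5} together with the known one-step identity for the cubic part and the bound of Lemma~\ref{lem-f3-max}) with a maximum bound $V \le \frac{4}{27}N^2$ (Lemma~\ref{lem-f-max}, proved by induction with Lemma~\ref{lem-f-max-induction} handling interior points). However, your proposal has a genuine gap at its center: you never construct $V$. You only postulate a function with properties (i)--(iii), but the existence of such a function is precisely the hard part; the paper's contribution is the specific choice $V(z) = 4N^2\bigl(f_3^{(K)} - 24 f_5^{(K)}\bigr)(\g(z)/N)$, with $f_3$ and $f_5$ the cubic and quintic alternating-parity gap polynomials, and the paper explicitly remarks that even slight variations fail (for instance, rescaling $V_3$ by $\frac{2}{K-1}$ violates the supermartingale condition because $K$ changes at annihilations). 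Without exhibiting $V$ and verifying the one-step inequality for it --- which in the paper requires the exact correlation computation of Lemma~\ref{lem-incr-f5} and the delicate cancellation against the surplus drop $\frac{K-1}{2}-1$ of $V_3$ --- the first half of your argument is a wish list, not a proof.

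Your plan for the second half also departs from what actually works and rests on an unsubstantiated claim. You propose a merging/smoothing operation, monotone for $V$, that collapses any configuration towards three tokens, followed by a convexity/symmetry reduction ``to the boundary or to the fully symmetric point.'' The paper proves no such monotonicity and needs none: its induction uses only the boundary behaviour (if some gap vanishes, $f^{(K)}$ restricts to $f^{(K-2)}$ by Lemma~\ref{lem-f-properties}(b)), while interior local maxima exceeding $\frac{1}{27}$ are excluded by a quite different mechanism --- first- and second-order conditions at such a point, summed over all $K$ rotations via combinatorial identities, force $\alpha \le \frac{216(K-1)}{23K-71} < 19.7$, contradicting $\alpha = 24$. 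Moreover, the reduction to the fully symmetric point cannot work for $K \ge 5$: at the uniform point of $D^{(5)}$ one has $f^{(5)} = \frac{1}{25} - \frac{24}{3125} \approx 0.032 < \frac{1}{27}$, so the maximizer is not the symmetric point, and $f$ is not concave, so symmetry alone does not locate the maximum. The monotone-merging inequality you would need is a strictly stronger statement than Lemma~\ref{lem-f-max} itself, and you give no argument for it; as it stands, this step would fail.
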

Equivalently, the Herman conjecture states that for all odd $K\ge3$ and all $z \in Z_K$ we have $\E \tim_z \le \frac{4}{27} N^2$.
Only the case $K=3$ was previously known \cite{McIverMorgan}.

The following proposition has been used in a similar form in various papers on Herman's protocol, for instance in~\cite[Lemma~5]{McIverMorgan}.
It bounds the stabilization time by a Lyapunov function~$V$.
\begin{proposition}[Bound by a Lyapunov function] \label{prop-lyapunov}
Given $z \in Z$, denote by $z' \in Z$ the random successor configuration of~$z$.
Let $V : Z \to \R$ be a function with 
\begin{align}
  \E(V(z') \mid z) \;&\le\; V(z) - 1
  &&\text{for all $z \in Z \setminus Z_1$, and}
  \label{eq-prop-lyapunov-step}
\\
  0 \;&\le\; V(z)
  &&\text{for all $z \in Z_1$.}
  \label{eq-prop-lyapunov-base}
\end{align}
Then $\E \tim_z \le V(z)$ for all $z \in Z$.
In particular, $V(z) \ge 0$ for all $z \in Z$.
\end{proposition}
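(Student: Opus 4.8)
The plan is to prove that the Lyapunov inequalities force $\E \tim_z \le V(z)$ by a standard supermartingale / optional-stopping argument, adapted so that it goes through without integrability hypotheses we have not assumed. First I would fix an initial configuration $z_0 \in Z$ and consider the process $(z_t)_{t\ge0}$ together with the stabilization time $\tim = \tim_{z_0}$. The key observation is that, by \eqref{eq-prop-lyapunov-step}, on the event $\{t < \tim\}$ we have $\E(V(z_{t+1}) \mid z_0,\dots,z_t) \le V(z_t) - 1$, while on $\{t \ge \tim\}$ the process has already reached $Z_1$ and stays there, so trivially $\E(V(z_{t+1})\mid z_0,\dots,z_t) = V(z_t)$ once $t\ge\tim$. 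Combining these, the process
\[
  M_t \;:=\; V(z_t) + \min(t,\tim)
\]
is a supermartingale with respect to the natural filtration, because adding $1$ per step up to time $\tim$ exactly compensates the drift $-1$ guaranteed by \eqref{eq-prop-lyapunov-step}, and after time $\tim$ neither term changes in expectation.

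Next I would extract the bound from this supermartingale. Since $Z$ restricted to configurations with at most, say, the number of tokens present in $z_0$ is a finite set, $V$ is bounded below on the reachable part of the state space; call this bound $-C$ with $C \ge 0$ (in fact \eqref{eq-prop-lyapunov-base} gives $V \ge 0$ on $Z_1$, and a finite reachable state space makes $V$ bounded below overall, so everything is integrable). Hence $M_t \ge -C + \min(t,\tim) \ge -C$, so $M_t$ is bounded below, and $\E M_t \le \E M_0 = V(z_0)$ for every $t$. This yields
\[
  \E\bigl[\min(t,\tim)\bigr] \;\le\; V(z_0) - \E V(z_t) \;\le\; V(z_0) + C
\]
for all $t$; by monotone convergence $\E\tim = \lim_{t\to\infty}\E[\min(t,\tim)] \le V(z_0) + C < \infty$, so in particular $\tim < \infty$ almost surely. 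Knowing $\tim$ is a.s.\ finite, I would then re-run the argument more carefully: on $\{\tim \le t\}$ we have $V(z_t) = V(z_\tim) \ge 0$ by \eqref{eq-prop-lyapunov-base}, and on $\{\tim > t\}$ we have $V(z_t) \ge -C$, so $\E V(z_t) \ge -C\,\P{\tim > t} \to 0$ as $t\to\infty$. Plugging this back into $\E[\min(t,\tim)] \le V(z_0) - \E V(z_t)$ and letting $t\to\infty$ gives $\E\tim \le V(z_0)$, which is the claimed bound; the final sentence $V(z)\ge 0$ for all $z$ then follows since $\E\tim_z \ge 0$.

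The main obstacle is the bookkeeping needed to justify taking limits and exchanging expectation with the limit without assuming a priori that $\E\tim < \infty$ or that $V$ is bounded. This is handled by first using the supermartingale to prove $\E\tim < \infty$ (hence a.s.\ finiteness of $\tim$), and only then concluding the sharp inequality; the finiteness of the reachable state space — which holds because the number of tokens never increases along a trajectory — is what makes $V$ bounded below and all the relevant random variables integrable, closing the argument. An alternative, essentially equivalent route is to apply the optional stopping theorem directly to the nonnegative supermartingale $V(z_{t\wedge\tim}) + (t\wedge\tim)$ at the stopping time $\tim$, but it still requires the same a.s.-finiteness input, so I would present the monotone-convergence version above as the cleanest self-contained argument.
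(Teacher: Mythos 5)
Your overall route---a supermartingale obtained by adding the elapsed time to $V$, followed by an optional-stopping/limiting argument---is the same as the paper's, and deriving $\E\tim_z<\infty$ from the Lyapunov condition itself (rather than quoting it, as the paper does) is a nice self-contained touch. However, one step as written fails: you claim that for $t\ge\tim$ one has $\E(V(z_{t+1})\mid z_0,\dots,z_t)=V(z_t)$ because ``the process has already reached $Z_1$ and stays there.'' The chain does stay in $Z_1$, but the configuration is not frozen: the surviving token still stays or moves clockwise with probability $1/2$ each, so $z_{t+1}\neq z_t$ with positive probability, and the hypotheses impose no drift condition on $V$ over $Z_1$---\eqref{eq-prop-lyapunov-step} is assumed only on $Z\setminus Z_1$, and \eqref{eq-prop-lyapunov-base} only says $V\ge 0$ there. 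For instance, a $V$ that is $0$ on one single-token configuration and very large on its clockwise rotation (and satisfies the hypotheses elsewhere) has strictly positive drift after stabilization, so your $M_t=V(z_t)+\min(t,\tim)$ is not a supermartingale. The same misconception reappears when you assert $V(z_t)=V(z_{\tim})$ on $\{\tim\le t\}$, although there you only need $V(z_t)\ge 0$, which does hold since $z_t\in Z_1$.

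The repair is exactly the stopped process you relegate to your closing remark: work with $W_t:=V(z_{t\wedge\tim})+(t\wedge\tim)$, which is a supermartingale using \eqref{eq-prop-lyapunov-step} alone, since after time $\tim$ it is constant. It is bounded below by your constant $-C$ on the finite reachable state space---not ``nonnegative,'' which you cannot assert before the proposition is proved---and your two-stage limit argument (first $\E\tim\le V(z_0)+C<\infty$ by monotone convergence, hence $\tim<\infty$ almost surely; then $\E V(z_{t\wedge\tim})\ge -C\Pr(\tim>t)\to 0$, hence $\E\tim\le V(z_0)$) goes through verbatim with this replacement, and the final claim $V\ge 0$ on $Z$ follows as you say. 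With that change the proof is correct; it is essentially the paper's optional-stopping argument written out via truncation and Fatou/monotone convergence instead of citing the optional stopping theorem, with the mild advantage of not needing the prior fact that $\tim$ has finite expectation.
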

Although this result is not new, we give a short proof based on a martingale argument.
The proof is inspired by~\cite{Haslegrave14}, and may provide some intuition.
\begin{proof}
Let $z \in Z$.
Consider the stochastic process $(z_t)_{t \ge 0}$ of configurations emanating from~$z=z_0$. Define $W_t := V(z_t) + t$.
By~\eqref{eq-prop-lyapunov-step} the process $(W_t)_{t \ge 0}$ is a supermartingale.
The stabilization time $\tim_{z} = \tim_{z_0}$ is a stopping time with finite expectation,
and the differences $|W_{t+1} - W_t|$ are bounded as the Markov chain reachable from~$z$ has finitely many states.
Hence, the optional stopping theorem applies, yielding
$
 \E W_{\tim_z} \le \E W_0 = V(z)
$.
By definition of~$W_t$ we have $\E W_{\tim_z} = \E V(z_{\tim_z}) + \E \tim_z$.
Since $z_{\tim_z}\in Z_1$,
  we have $\E \tim_z \le \E W_{\tim_z}$ by~\eqref{eq-prop-lyapunov-base}.
By combining the previous two inequalities, we obtain $\E \tim_z \le V(z)$.
\end{proof}

Following~\cite{FengZhang14,Haslegrave14} we associate with a configuration $z \in Z_K$ 
the \emph{gap vector} $\vec{g}(z) = (g_0, \ldots, g_{K-1}) \in \N^K$ by setting 
$g_0 := N+z(1)-z(K)$,
and $g_i := z(i+1) - z(i)$ for $i \in \{1, \ldots, K-1\}$.
Then $\vec{g}(z)/N$ lives in the so-called standard $(K-1)$-simplex $D^{(K)}$, defined by 
\[
 D^{(K)} := \left \{\vec{x} = (x_0, \ldots, x_{K-1}) \in [0,1]^K \ \mid \ x_0 + \cdots + x_{K-1} = 1\right \}.
\]
Towards a suitable Lyapunov function~$V$ we define the cubic polynomial $f_3^{(K)} : D^{(K)} \to [0, \infty)$ by 
\[
  f_3^{(K)}(\vec{x}) :=
    \hskip-2em
    \sum_{\substack{0\le i_0<i_1<i_2<K\\\text{$i_2-i_1$, $i_1-i_0$ odd}}}
    \hskip-2em
       x_{i_0} x_{i_1} x_{i_2}.
\]
For instance, we have $f_3^{(5)}(\vec{x}) = 
x_0 x_1 x_2 +
x_0 x_1 x_4 +
x_0 x_3 x_4 +
x_1 x_2 x_3 +
x_2 x_3 x_4$.

The following lemma was implicitly proved in previous works:
\begin{lemma}[%
  Lyapunov function $V_3$
  {\cite[Page 240, Proof of Theorem~1]{FengZhang14} and \cite[Theorem 4]{Haslegrave14}}%
]\label{lem-drop-f3}
Let $V_3 : Z \to [0, \infty)$ be defined by $V_3(z) := 4 N^2 f_3^{(K)}(\vec{g}(z)/N)$ for $z \in Z_K$.
Denote by $z' \in Z_1 \cup Z_3 \cup \ldots \cup Z_K$ 
the random successor configuration of~$z \in Z_K$.
Then $\E(V_3(z') \mid z) = V_3(z) - \frac{K-1}{2}$ for all $z \in Z_K$.
Hence, by Proposition~\ref{prop-lyapunov}, $\E \tim_z \le 4 N^2 f_3^{(K)}(\vec{g}(z)/N)$.
\end{lemma}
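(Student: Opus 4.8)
The essential content of the lemma is the \emph{drift identity} $\E(V_3(z') \mid z) = V_3(z) - \frac{K-1}{2}$ for every $z \in Z_K$; the stated bound on $\E \tim_z$ then follows at once from Proposition~\ref{prop-lyapunov} applied to $V_3$, using that $(K-1)/2 \ge 1$ for $K \ge 3$ and that $f_3^{(1)}$ is an empty sum, so $V_3 \equiv 0$ on $Z_1$ and~\eqref{eq-prop-lyapunov-base} holds. So the plan is to prove the drift identity.

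First I would describe one step of the protocol directly in terms of gap vectors. Let $X_i \in \{0,1\}$ indicate that token~$i$ moves (clockwise) in the step; the $X_i$ are independent and uniform. Orienting so that token~$i+1$ is the clockwise neighbour of token~$i$, the gap $g_i$ becomes the \emph{provisional} value $\tilde g_i := g_i + X_{i+1} - X_i$. Since $g_i \ge 1$ we have $\tilde g_i \ge 0$ always, and $\tilde g_i = 0$ occurs exactly when $g_i = 1$, $X_i = 1$, $X_{i+1} = 0$ --- precisely the event that tokens $i$ and $i+1$ collide and annihilate. A short check shows that two such events cannot occur at adjacent indices (the shared token would have to both move and stay), and that, when $\tilde g_i = 0$, the gap of $z'$ between the surviving neighbours $i-1$ and $i+2$ is $\tilde g_{i-1} + \tilde g_i + \tilde g_{i+1}$. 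Hence the gap vector of $z'$ is obtained from $\tilde{\vec g} = (\tilde g_0, \ldots, \tilde g_{K-1})$ by repeatedly contracting, around each zero entry, the three consecutive entries $\tilde g_{i-1}, \tilde g_i, \tilde g_{i+1}$ into their sum until no zeros remain.

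The structural heart of the argument is that $f_3$ is invariant under such a contraction. Merging a block $a, 0, b$ at positions $i-1, i, i+1$ into a single entry $a+b$ and deleting two positions preserves $f_3$ because: (i) deleting an even number of positions preserves all parities of index differences, so valid triples disjoint from the block survive and are in bijection; (ii) a valid triple meeting position~$i$ contributes~$0$; and (iii) no valid triple with nonzero contribution contains both positions $i-1$ and $i+1$, since their index difference is $2$, so the only valid triple through both is $\{i-1,i,i+1\}$, again killed by the middle~$0$. Therefore the valid triples through the merged entry split, via $(a+b)\,x_{i_1}x_{i_2} = a\,x_{i_1}x_{i_2} + b\,x_{i_1}x_{i_2}$, into exactly the (relabelled) valid triples through $i-1$ and through $i+1$. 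Applying this at every contracted block gives $f_3^{(K')}(\vec g(z')/N) = f_3^{(K)}(\tilde{\vec g}/N)$ for \emph{every} outcome of the coins (with $K'$ the number of surviving tokens); taking expectations, and writing $\epsilon_i := (X_{i+1} - X_i)/N$, this yields
\[
  \E\left(V_3(z') \mid z\right) \;=\; 4N^2\, \E\left[\, f_3^{(K)}(\vec g(z)/N + \vec\epsilon) \,\right].
\]

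It remains to evaluate this expectation, which I expect to be the main obstacle. Put $y_j := g_j(z)/N$, so $\sum_j y_j = 1$, and expand each monomial $x_{i_0}x_{i_1}x_{i_2}$ of $f_3^{(K)}$ at $\vec y + \vec\epsilon$. The terms linear in $\vec\epsilon$ drop out because $\E\epsilon_j = 0$; for the remaining terms one needs only $\E[\epsilon_a\epsilon_b] = -\frac{1}{4N^2}$ when $a \ne b$ are cyclically adjacent and $0$ otherwise, together with $\E[\epsilon_a\epsilon_b\epsilon_c] = 0$ for all distinct $a,b,c$ (each an elementary $\{0,1\}$-valued computation; for a disconnected triple of indices the product even factors through an independent mean-zero block). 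Hence only the $\vec\epsilon$-quadratic terms survive, and in the resulting sum the coefficient of $y_j$ is $-\frac{1}{4N^2}$ times the number of cyclically adjacent pairs $\{a,b\}$ with $a,b \ne j$ for which $\{a,b,j\}$, sorted, meets the parity conditions defining $f_3^{(K)}$. A short case analysis --- in which it is essential that $K$ is odd (the cyclic pair $\{0,K-1\}$ contributes only then) --- shows this number is $(K-1)/2$ for \emph{every} $j$; so the quadratic total is $-\frac{1}{4N^2}\cdot\frac{K-1}{2}\sum_j y_j = -\frac{K-1}{8N^2}$, and multiplying by $4N^2$ gives the drift identity. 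The delicate points are thus the annihilation/contraction bookkeeping and invariance claim~(iii), and --- in the final step --- the fact that the per-index count does not depend on $j$: that is exactly what turns a $\vec y$-dependent quadratic form into the constant $(K-1)/2$, making $V_3$ drop by precisely one per collision.
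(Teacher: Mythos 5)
Your proposal is correct, and it supplies a proof where the paper itself gives none: Lemma~\ref{lem-drop-f3} is only cited (to Feng--Zhang and Haslegrave), so the closest in-house comparison is the paper's proof of the analogous $f_5$ drift, Lemma~\ref{lem-incr-f5}, in Appendix~\ref{sec:incr-f5-proof}. Your computation is essentially that method specialized to $f_3$: expand the polynomial at the provisional gap vector $\vec{g}(z)+\Delta$, kill the odd-degree-in-$\Delta$ terms by mean-zero/independence, and use $\E[\Delta_i\Delta_{i+1}]=-\tfrac14$ for cyclically adjacent gaps (zero otherwise) plus a counting argument; your count that each index $j$ lies in exactly $\tfrac{K-1}{2}$ valid triples whose other two members are cyclically adjacent checks out, including the wrap-around pair and the endpoint cases $j=0,K-1$, and this is where the oddness of $K$ enters, exactly as you flag. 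What you add beyond the appendix's treatment is the explicit annihilation bookkeeping: the appendix defines $\Delta=\vec{g}(z')-\vec{g}(z)$ even though $z'$ may have fewer tokens, implicitly identifying $f$ of the successor's gap vector with $f$ evaluated at the provisional vector containing zeros; your contraction argument (merging $a,0,b$ into $a+b$ preserves $f_3$, zeros are never adjacent, iterate) is precisely the $f_3$ analogue of the continuity property in Lemma~\ref{lem-f-properties}(b), which the paper attributes to prior work, so your write-up is self-contained where the paper relies on citation.
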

For $K=3$ Lemma~\ref{lem-drop-f3} gives
$\E \tim_z \le 4 N^2 f_3^{(K)}(\vec{g}(z)/N) = \frac{4}{N} g_0 g_1 g_2$.
In fact, for $K=3$ it was shown before in~\cite{McIverMorgan} that $\E \tim_z$ is identically equal to $\frac{4}{N} g_0 g_1 g_2$,
providing an exact formula for the expected stabilization time of configurations with three tokens.
%
Lemma~\ref{lem-drop-f3} suggests analyzing~$f_3$:
\begin{lemma}[Maximum of $f_3$ {\cite[Proof of Theorem~2]{FengZhang14}, \cite[Theorem 3]{Haslegrave14}}] \label{lem-f3-max}
For all $K \ge 3$ odd we have
\[
 \max_{\vec{x} \in D} f_3^{(K)}(\vec{x}) 
 \ = \ f_3^{(K)}\left(\frac{1}{K}, \ldots, \frac{1}{K}\right)
 \ = \ \frac{1}{24} \left(1 - \frac{1}{K^2} \right)\,.
\]
\end{lemma}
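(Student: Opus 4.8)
The plan is to prove the two displayed equalities in turn. For the value at the barycenter: since $K$ is odd, a triple $0\le i_0<i_1<i_2<K$ has $i_1-i_0$ and $i_2-i_1$ odd iff all three cyclic gaps $i_1-i_0,\ i_2-i_1,\ K-i_2+i_0$ are odd (the third is then automatic), so $f_3^{(K)}$ is in fact invariant under the dihedral action on the indices. Counting such triples reduces to counting ordered gap triples $(2a+1,2b+1,2c+1)$ with $a+b+c=(K-3)/2$ up to cyclic rotation, which gives $\tfrac{K}{3}\binom{(K+1)/2}{2}=\tfrac{K(K^2-1)}{24}$ triples; each contributes $K^{-3}$ at the barycenter, so $f_3^{(K)}(\tfrac1K,\dots,\tfrac1K)=\tfrac{K^2-1}{24K^2}=\tfrac1{24}(1-\tfrac1{K^2})$.

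For the upper bound I would use the following identity, obtained by grouping the good triples by their middle index: $i<j<k$ is good precisely when $i$ and $k$ both have parity opposite to $j$, so $f_3^{(K)}(\vec x)=\sum_{j}x_jL_jR_j$, where $L_j$ and $R_j$ denote the total weights of the positions lying before, respectively after, $j$ whose parity is opposite to that of $j$. Writing $\alpha=\sum_{i\text{ even}}x_i$ and $\beta=\sum_{i\text{ odd}}x_i$, we have $\alpha+\beta=1$ and $L_j+R_j$ equals $\beta$ for even $j$ and $\alpha$ for odd $j$; hence $L_jR_j=\tfrac14\bigl((L_j+R_j)^2-(L_j-R_j)^2\bigr)$ gives
\[
 f_3^{(K)}(\vec x)=\frac{\alpha\beta}{4}-\frac{D(\vec x)}{4},\qquad D(\vec x):=\sum_j x_j(L_j-R_j)^2\ \ge\ 0 .
\]
Since $\alpha+\beta=1$ forces $\alpha\beta\le\tfrac14$, in the range $\alpha\beta\le\tfrac16(1-\tfrac1{K^2})$ we immediately get $f_3^{(K)}(\vec x)\le\alpha\beta/4\le\tfrac1{24}(1-\tfrac1{K^2})$, and in the complementary range it suffices to prove the matching lower bound $D(\vec x)\ge\alpha\beta-\tfrac16(1-\tfrac1{K^2})$.

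I expect that last inequality to be the crux. Here $D$ is a weighted second moment of the quantities $L_j-R_j$, which have weighted mean $0$; as $j$ runs through the even positions in increasing order the numbers $2L_j-\beta$ increase from $-\beta$ to $+\beta$ in steps equal to twice the intervening odd weights, with an analogous description for the odd positions (roles of $\alpha$ and $\beta$ swapped). Minimizing $D$ over $D^{(K)}$ thus becomes a finite-dimensional convex-minimization problem in the even weights, the odd weights and their partial sums, and the plan is to solve it via its KKT conditions (or a direct exchange/smoothing argument), the minimizer turning out to be the equidistant configuration, at which one computes $D=\tfrac1{12}(1-\tfrac1{K^2})=\alpha\beta-\tfrac16(1-\tfrac1{K^2})$ exactly --- the discreteness of this extremal value being precisely what accounts for the $1/K^2$ term. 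Note that, since the identity above and both estimates hold at every point of $D^{(K)}$, there is no need to treat the boundary faces of the simplex separately. (If one prefers the more classical route, the dihedral symmetry makes the barycenter a critical point of the homogeneous cubic $f_3^{(K)}$, with Lagrange multiplier $3f_3^{(K)}$ by Euler's identity, and the same structural identity is the tool needed to rule out competing critical configurations.)
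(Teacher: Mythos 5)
A preliminary remark: the paper does not prove this lemma at all; it quotes it from the cited works of Feng--Zhang and Haslegrave, so your attempt has to be judged on its own terms. The parts you actually carry out are fine: the evaluation at the barycenter (the count $\tfrac{K}{3}\binom{(K+1)/2}{2}=\tfrac{K(K^2-1)}{24}$ of all-odd-gap triples, hence the value $\tfrac1{24}(1-\tfrac1{K^2})$) is correct, and so is the middle-index decomposition $f_3^{(K)}(\vec{x})=\sum_j x_jL_jR_j=\tfrac14\bigl(\alpha\beta-D(\vec{x})\bigr)$ with $D(\vec{x})=\sum_j x_j(L_j-R_j)^2\ge0$. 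The trouble is what you do with it. By that very identity, $4f_3^{(K)}=\alpha\beta-D$ pointwise, so the inequality you isolate as ``the crux'', namely $D(\vec{x})\ge\alpha\beta-\tfrac16(1-\tfrac1{K^2})$, is \emph{exactly equivalent} to the inequality $f_3^{(K)}(\vec{x})\le\tfrac1{24}(1-\tfrac1{K^2})$ that you set out to prove; the case split on $\alpha\beta$ disposes of a region where the bound is easy, but on the complementary (interesting) region your reduction is a restatement of the lemma, not a step towards it. For that restatement you offer only a plan --- ``KKT conditions or a direct exchange/smoothing argument, the minimizer turning out to be the equidistant configuration'' --- which is precisely the content of the lemma, asserted rather than proved.

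Moreover, the optimization you describe is not set up soundly. The unconstrained minimum of $D$ over $D^{(K)}$ is $0$, attained for instance by putting all mass on indices of a single parity (every term $x_j(L_j-R_j)^2$ then vanishes), so ``the minimizer of $D$ is the equidistant configuration'' is false as stated; what you actually need is a bound on $D$ coupled to $\alpha\beta$, i.e.\ a minimization of $D$ with the parity masses held fixed, or of $D-\alpha\beta=-4f_3^{(K)}$, which is again the original problem. Nor is the claimed convexity available for free: $D$ is a cubic form (weights times squares of linear forms in the weights), so describing the problem as ``finite-dimensional convex minimization'' needs justification that is not given, and the KKT/exchange analysis --- including why competing critical points and boundary faces cannot beat the uniform point --- is exactly the missing work. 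As it stands, your argument establishes the second equality of the lemma (the value at the uniform point) but not the maximality claim, which is the substance of the statement.
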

By combining Lemmas \ref{lem-drop-f3}~and~\ref{lem-f3-max} one obtains
$\E \tim_z \le \frac{N^2}{6}(1 - \frac{1}{K^2})$,
which is the bound obtained in~\cite{FengZhang14}.
A slightly better bound is given in~\cite{Haslegrave14}. 

\section{Proof of the Herman Conjecture} \label{sec-overview} 

The function~$V_3$ from Lemma~\ref{lem-drop-f3} leaves room for improvement since $\E(V_3(z') \mid z) = V_3(z) - \frac{K-1}{2}$,
which is strictly less than $V_3(z) - 1$ for $K > 3$.
The idea for obtaining an optimal bound is to decrease the gap between $\frac{K-1}{2}$ and~$1$,
by decreasing the Lyapunov function~$V$.
One could think that the scaled function $\frac{2}{K-1} V_3$ is also a Lyapunov function satisfying~\eqref{eq-prop-lyapunov-step}, but this is not true;
in particular, note that the number of tokens~$K$ might be different for a configuration~$z$ and its successor~$z'$.
Since scaling does not work, we decrease the Lyapunov function by subtracting a quintic polynomial, as follows.
Define a quintic polynomial $f_5^{(K)} : D^{(K)} \to [0, \infty)$,
similar to~$f_3^{(K)}$:
\begin{align*}
  f_5^{(K)}(\vec{x}) =
    \hskip-2em
    \sum_{\substack{0\le i_0<i_1<\cdots<i_4<K\\\text{$i_4-i_3,\ldots,i_1-i_0$ odd}}}
    \hskip-2em
      x_{i_0} x_{i_1} x_{i_2} x_{i_3} x_{i_4}
\end{align*}
For instance, 
$f_5^{(3)}(\vec{x}) = 0$,
$f_5^{(5)}(\vec{x}) = 
x_0 x_1 x_2 x_3 x_4$,
and
$f_5^{(7)}(\vec{x}) = 
x_0 x_1 x_2 x_3 x_4 +
x_0 x_1 x_2 x_3 x_6 +
x_0 x_1 x_2 x_5 x_6 +
x_0 x_1 x_4 x_5 x_6 +
x_0 x_3 x_4 x_5 x_6 +
x_1 x_2 x_3 x_4 x_5 +
x_2 x_3 x_4 x_5 x_6$.
%
We also define a polynomial $f^{(K)} : D^{(K)} \to [0, \infty)$:
\begin{equation} \label{eq-f} 
f^{(K)}(\vec{x}) := f_3^{(K)}(\vec{x}) - \alpha f_5^{(K)}(\vec{x}) \qquad 
\text{with } \alpha := 24
\end{equation}
For example,
$f^{(5)}(\vec{x}) =
x_0 x_1 x_2 +
x_0 x_1 x_4 +
x_0 x_3 x_4 +
x_1 x_2 x_3 +
x_2 x_3 x_4 - \alpha x_0 x_1 x_2 x_3 x_4$.
%
Throughout the paper we use $\alpha$ in the expression of $f^{(K)}$ for notational convenience. From now onwards we may drop the superscript $K$  from the domain $D^{(K)}$ of the functions $f_3^{(K)}$, $f_5^{(K)}$ and $f^{(K)}$ to avoid notational clutter when $K$ is understood.

The following properties of~$f$ are fundamental:
\begin{lemma}[Symmetry and continuity properties] \label{lem-f-properties}
The function~$f$ has the following properties.
\begin{itemize}
\item[(a)] It is symmetric with respect to rotation: 
\[ 
f(x_0, \ldots, x_{K-1}) = f(x_1, \ldots, x_{K-1}, x_0) 
\]
\item[(b)] It is continuous: 
For $K \ge 5$ we have 
\[
f^{(K)}(x_0, 0, x_2, x_3, \ldots, x_{K-1}) = f^{(K-2)}(x_0+x_2, x_3, \ldots, x_{K-1}).
\]
\end{itemize}
\end{lemma}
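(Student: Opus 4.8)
The plan is to rewrite $f_3^{(K)}$ and $f_5^{(K)}$ in a manifestly rotation-symmetric form and then read off both parts, part~(b) via a small combinatorial bijection. Since $f^{(K)} = f_3^{(K)} - \alpha f_5^{(K)}$ with $\alpha$ independent of~$K$, it suffices to prove (a) and (b) with $f$ replaced by each of $f_3$ and $f_5$ separately.

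The first step is a reformulation lemma: for $m \in \{1,2\}$,
\[
  f_{2m+1}^{(K)}(\vec{x}) \;=\; \sum_{S} \ \prod_{i \in S} x_i ,
\]
where $S$ ranges over all $(2m+1)$-element subsets of $\mathbb{Z}/K\mathbb{Z}$ all of whose \emph{cyclic gaps} (the lengths of the arcs between cyclically consecutive elements of~$S$) are odd. The only content here is a parity count: writing $S = \{i_0 < i_1 < \cdots < i_{2m}\}$ as in the definition of $f_{2m+1}^{(K)}$, the $2m$ ``linear'' gaps $i_1-i_0,\dots,i_{2m}-i_{2m-1}$ sum to $i_{2m}-i_0$, so if they are all odd their sum is even, whence the remaining ``wrap-around'' gap $K - i_{2m} + i_0$ is odd because $K$ is odd; conversely, all $2m+1$ cyclic gaps being odd forces in particular the $2m$ linear ones to be odd. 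Hence the index set in the definition is exactly the all-gaps-odd family. Part~(a) is then immediate, since a cyclic shift of the coordinates permutes these subsets and preserves their multiset of cyclic gaps, so the all-gaps-odd family is invariant.

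For part~(b) the idea is a ``contraction'' of the ring. Setting $x_1 = 0$ kills precisely the monomials whose subset contains the index~$1$, so $f_{2m+1}^{(K)}(x_0,0,x_2,\dots,x_{K-1})$ is the sum of $\prod_{i\in S} x_i$ over all-gaps-odd $(2m+1)$-subsets $S \subseteq \{0,2,3,\dots,K-1\}$ of $\mathbb{Z}/K\mathbb{Z}$. I would match these against the all-gaps-odd $(2m+1)$-subsets $T$ of $\mathbb{Z}/(K-2)\mathbb{Z}$, whose generating polynomial is $f_{2m+1}^{(K-2)}$ evaluated under the substitution $y_0 := x_0+x_2$, $y_j := x_{j+2}$ for $j \ge 1$ (so that a factor $y_0$ expands into two terms, $x_0$ and $x_2$). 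The correspondence deletes vertex~$1$ and identifies vertices $0$ and~$2$, which turns $\mathbb{Z}/K\mathbb{Z}$ into $\mathbb{Z}/(K-2)\mathbb{Z}$. Since $0,1,2$ are three consecutive vertices of the ring and $1\notin S$, this operation alters the arc structure of~$S$ in exactly one cyclic gap, shortening that gap by~$2$ and thus preserving all gap parities while lowering the total from $K$ to~$K-2$; note that $S$ cannot contain both $0$ and~$2$, as these would be consecutive in~$S$ and create a gap of even length~$2$. Conversely, a subset $T$ containing the merged vertex~$0$ lifts to two distinct valid subsets~$S$, depending on whether its element at~$0$ is placed at vertex~$0$ or at vertex~$2$ of the larger ring — exactly matching the two monomials $x_0$ and $x_2$ coming from $y_0 = x_0 + x_2$. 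Checking that this is a monomial-preserving bijection with nothing left over on either side yields $f_{2m+1}^{(K)}(x_0,0,x_2,\dots,x_{K-1}) = f_{2m+1}^{(K-2)}(x_0+x_2,x_3,\dots,x_{K-1})$, and combining the $m=1$ identity with $-\alpha$ times the $m=2$ identity gives~(b).

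The main obstacle is the bookkeeping in the contraction bijection: tracking which cyclic gap of~$S$ absorbs the deleted vertex~$1$ and the merge of vertices $0$ and~$2$ in each of the three cases ($S$ avoids $\{0,2\}$; $S$ contains $0$ but not~$2$; $S$ contains $2$ but not~$0$), verifying that the affected gap shrinks by exactly~$2$ in all of them, and confirming that the two lifts of a set~$T$ through the merged vertex are genuinely distinct and still all-gaps-odd. Given the reformulation lemma and the parity observation, everything else — including part~(a) — is routine.
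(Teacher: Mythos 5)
Your proposal is correct, and it is genuinely more self-contained than what the paper does: the paper gives no proof of this lemma at all, instead citing the fact that analogous properties were established for $f_3$ in the Feng--Zhang paper and asserting that the same argument carries over to $f_5$ and hence to $f$. Your route makes that assertion concrete in a uniform way. The reformulation lemma is right: for a set $\{i_0<\cdots<i_{2m}\}$ the linear gaps sum to $i_{2m}-i_0$, which is even when they are all odd, so the wrap-around gap $K-i_{2m}+i_0$ is odd precisely because $K$ is odd; thus the index family in the definition of $f_{2m+1}^{(K)}$ is exactly the family of $(2m+1)$-subsets of $\mathbb{Z}/K\mathbb{Z}$ with all cyclic gaps odd, and part (a) follows since rotation preserves cyclic gaps. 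The contraction argument for (b) also goes through: setting $x_1=0$ leaves the all-gaps-odd subsets avoiding $1$ (which, as you note, can contain at most one of $0,2$, since $\{0,2\}\subseteq S$ with $1\notin S$ would create an even gap of length $2$), deleting vertex $1$ and merging $0$ with $2$ shrinks exactly one cyclic gap by $2$ in each of your three cases, and a subset $T$ of the small ring lifts uniquely if $0\notin T$ and in exactly two distinct valid ways if $0\in T$, matching the expansion of $y_0=x_0+x_2$; since $\alpha$ does not depend on $K$, the identities for $f_3$ and $f_5$ combine to give the claim for $f$. What your approach buys is a proof that is explicit about where the oddness of $K$ enters and that treats $f_3$ and $f_5$ (indeed any odd degree) by one argument; what the paper's citation-based shortcut buys is brevity, at the cost of leaving the reader to check that the $f_3$ argument of the cited work really does transfer. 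The bookkeeping you flag as the remaining obstacle is exactly the right list of checks, and all of them succeed.
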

Analogous properties were shown for~$f_3$ in~\cite{FengZhang14}.
Their proof carries over to~$f_5$ and hence to~$f$.
The following lemma uses~$f$ to define a tighter Lyapunov function.
\begin{lemma}[Lyapunov function $V$] \label{lem-drop-f}
Define $V : Z \to [0, \infty)$ by
$V(z) := 4 N^2 f(\vec{g}(z)/N)$. 
Let $z \in Z$ and denote by $z'$ the random successor configuration of~$z$.
Then $\E(V(z') \mid z) \le V(z) - 1$.
Hence, by Proposition~\ref{prop-lyapunov}, $\E \tim_z \le 4 N^2 f(\vec{g}(z)/N)$.
\end{lemma}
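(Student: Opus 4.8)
The plan is to reduce Lemma~\ref{lem-drop-f} to a one-step estimate on the quintic part of~$f$, using the facts already available for~$f_3$. Write $V = V_3 - \alpha V_5$, where $V_5(z) := 4 N^2 f_5^{(K)}(\g(z)/N)$ for $z \in Z_K$ and $\alpha = 24$. By Lemma~\ref{lem-drop-f3} we have $\E(V_3(z') \mid z) = V_3(z) - \frac{K-1}{2}$, so, after cancellation, the claim $\E(V(z')\mid z) \le V(z) - 1$ becomes equivalent to
\[
 V_5(z) - \E(V_5(z') \mid z) \;\le\; \frac{K-3}{2\alpha} \;=\; \frac{K-3}{48}
 \qquad\text{for all odd } K \ge 3 \text{ and } z \in Z_K .
\]
For $K = 3$ this is trivial since $f_5^{(3)} \equiv 0$; moreover the base-case hypothesis of Proposition~\ref{prop-lyapunov} is automatic, because $f^{(1)} \equiv 0$ forces $V \equiv 0$ on $Z_1$.

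The one-step estimate is, I expect, the main obstacle (and is presumably the content of the Lemma~\ref{lem-incr-f5} referred to above). Model the step by independent fair coins $X_1, \dots, X_K$ deciding which tokens move; the gap vector then changes by a random vector $\vec\Delta$, whose $i$-th coordinate is a difference of two of the $X_j$ (adjacent gaps share a token), so that $\E \Delta_i = 0$ and the covariance matrix of $\vec\Delta$ is $\frac14$ times the Laplacian of the cycle $C_K$ (in particular $\mathrm{Var}(\Delta_i) = \frac12$, $\mathrm{Cov}(\Delta_i, \Delta_{i\pm 1}) = -\frac14$, and the remaining covariances vanish). An annihilation happens precisely when some coordinate of $\g(z) + \vec\Delta$ is $0$; only gaps equal to $1$ can vanish, and two vanishing gaps are never adjacent, so applying Lemma~\ref{lem-f-properties}(b) once per vanishing coordinate gives
\[
 \E(V_5(z') \mid z) \;=\; 4 N^2 \, \Ex{ f_5^{(K)}\!\left( (\g(z) + \vec\Delta)/N \right) },
\]
the polynomial being simply evaluated on the closed simplex. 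Since $f_5^{(K)}$ is multilinear of degree $5$, expand $f_5^{(K)}((\g + \vec\Delta)/N)$ and take expectations termwise; everything reduces to the moments $\Ex{ \prod_{i \in A} \Delta_i }$, and the key point is that, viewing $A$ as a set of edges of $C_K$, this moment is $0$ unless $A$ is a disjoint union of even-length arcs, in which case it equals $\prod_{\sigma} (-\frac14)^{|\sigma|/2}$ over the arcs $\sigma$. For degree $5$ only $|A| \in \{0, 2, 4\}$ survive: $|A| = 0$ returns $V_5(z)$; the only nonzero $|A| = 2$ term is $A = \{p, p+1\}$ (indices mod $K$) with weight $-\frac14$ — note $\partial_p^2 f_5^{(K)} = 0$ by multilinearity, so no ``diagonal'' term appears — and the $|A| = 4$ terms all have weight $+\frac1{16}$ times a monomial, hence are nonnegative on the simplex. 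Thus $\E(V_5(z') \mid z) = V_5(z) - \sum_{p=0}^{K-1} (\partial_p \partial_{p+1} f_5^{(K)})(\g(z)/N) + R(z)$ with $R(z) \ge 0$, so $V_5(z) - \E(V_5(z') \mid z) \le \sum_{p=0}^{K-1} (\partial_p \partial_{p+1} f_5^{(K)})(\g(z)/N)$.

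It remains to establish the identity $\sum_{p=0}^{K-1} \partial_p \partial_{p+1} f_5^{(K)} = \frac{K-3}{2} f_3^{(K)}$. By rotational symmetry (Lemma~\ref{lem-f-properties}(a)), being a ``valid tuple'' just means all cyclic gaps are odd. Every monomial of $\sum_p \partial_p \partial_{p+1} f_5^{(K)}$ is the product of the three coordinates left after deleting a cyclically adjacent pair from a valid $5$-tuple, and deleting such a pair merges three odd cyclic gaps into one odd gap, so the remaining triple is always a valid $3$-tuple. Conversely, a valid $3$-tuple cuts $C_K$ into three odd arcs of lengths $\ell_1 + \ell_2 + \ell_3 = K$, and a short count shows that a cyclically adjacent pair can be inserted into the arc of length $\ell$, keeping all cyclic gaps odd, in exactly $\frac{\ell - 1}{2}$ ways; summing over the three arcs gives multiplicity $\sum_{m} \frac{\ell_m - 1}{2} = \frac{K-3}{2}$ for each monomial of $f_3^{(K)}$ (as a check, $K = 5$ yields $f_3^{(5)}$ and $K = 7$ yields $2 f_3^{(7)}$).

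Combining the last two steps with Lemma~\ref{lem-f3-max} gives
\[
 V_5(z) - \E(V_5(z') \mid z) \;\le\; \frac{K-3}{2}\, f_3^{(K)}(\g(z)/N) \;\le\; \frac{K-3}{2} \cdot \frac{1}{24} \;=\; \frac{K-3}{48},
\]
which is exactly the bound isolated in the first step. With Lemma~\ref{lem-drop-f3} this yields $\E(V(z') \mid z) = \E(V_3(z')\mid z) - 24\,\E(V_5(z')\mid z) \le V(z) - 1$, and Proposition~\ref{prop-lyapunov} then gives $\E \tim_z \le 4 N^2 f(\g(z)/N)$. This also pins down the constant: $\alpha = 24$ is the unique value for which $-\frac{K-1}{2} + \frac{\alpha}{48}(K-3) = -1$ for every odd $K$, and thus the largest one for which this argument goes through.
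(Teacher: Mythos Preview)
Your argument is correct and follows essentially the same route as the paper: decompose $V=V_3-24V_5$, use Lemma~\ref{lem-drop-f3} for the cubic part, control the quintic part via the moments of~$\vec\Delta$, and finish with Lemma~\ref{lem-f3-max}. Your combinatorial identity $\sum_p \partial_p\partial_{p+1} f_5^{(K)} = \tfrac{K-3}{2} f_3^{(K)}$ is exactly the degree-$3$ computation in the paper's proof of Lemma~\ref{lem-incr-f5}, and your observation that the $|A|=4$ moments are all $+\tfrac1{16}$ (hence $R(z)\ge 0$) is a small shortcut over the paper, which computes that remainder exactly as $\tfrac{(K-1)(K-3)}{32N^2}$ before discarding it in the same way.
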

We remark that
  a similar Lyapunov function has been investigated in~\cite[Equation~(15)]{FengZhang14},
  but did not lead to a proof of the Herman conjecture.
It seems that $V(z)$~needs to be chosen with great care,
  since even slight variations do not work.

Lemma~\ref{lem-drop-f} suggests analyzing~$f$:
\begin{lemma}[Maximum of $f$] \label{lem-f-max}
For all $K \ge 3$ odd we have
\[
 \max_{\vec{x} \in D} f^{(K)}(\vec{x}) = \frac{1}{27}.
 \]
\end{lemma}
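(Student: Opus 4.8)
The plan is to prove the two inequalities $\max_{\vec{x}\in D} f^{(K)}(\vec{x}) \ge \frac{1}{27}$ and $\max_{\vec{x}\in D} f^{(K)}(\vec{x}) \le \frac{1}{27}$ separately. The lower bound is witnessed by an explicit point: take the three-token-like configuration obtained by placing mass only on three coordinates at mutually odd distances, e.g.\ $x_0 = x_1 = x_2 = \frac{1}{3}$ and all other coordinates zero (for $K=3$ this is the barycenter). At such a point $f_5^{(K)}$ vanishes because any monomial of $f_5$ uses five distinct coordinates, so $f^{(K)}(\vec{x}) = f_3^{(K)}(\vec{x}) = \frac{1}{27}$. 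By rotational symmetry (Lemma~\ref{lem-f-properties}(a)) and the continuity/restriction property (Lemma~\ref{lem-f-properties}(b)), such a point with three nonzero coordinates at odd pairwise distances exists for every odd $K\ge 3$, so $\max_D f^{(K)} \ge \frac{1}{27}$.

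For the upper bound I would argue by induction on $K$, using Lemma~\ref{lem-f-properties}(b) to handle boundary points of $D$. The base case $K=3$ is immediate since $f^{(3)} = f_3^{(3)}$ and the maximum of $f_3^{(3)}(\vec{x}) = x_0 x_1 x_2$ on the simplex is $\frac{1}{27}$, attained at the barycenter. For the inductive step, fix $K\ge 5$ and let $\vec{x}^\ast$ be a maximizer of $f^{(K)}$ on the compact set $D^{(K)}$. If some coordinate of $\vec{x}^\ast$ is zero, then by rotational symmetry we may assume $x^\ast_1 = 0$, and Lemma~\ref{lem-f-properties}(b) gives $f^{(K)}(\vec{x}^\ast) = f^{(K-2)}(x^\ast_0 + x^\ast_2, x^\ast_3, \ldots, x^\ast_{K-1})$, which is at most $\frac{1}{27}$ by the induction hypothesis. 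So it remains to treat interior maximizers, i.e.\ points $\vec{x}^\ast$ with all coordinates strictly positive.

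The heart of the proof is thus the claim that $f^{(K)}$ has no interior critical point with value exceeding $\frac{1}{27}$ — and this is the step I expect to be the main obstacle. At an interior critical point the Lagrange conditions say $\partial_i f^{(K)}(\vec{x}^\ast) = \lambda$ for all $i$, for a common multiplier $\lambda$; summing against $x^\ast_i$ and using Euler's relation for the homogeneous parts gives $3 f_3^{(K)}(\vec{x}^\ast) - 5\alpha f_5^{(K)}(\vec{x}^\ast) = \lambda$, while $\sum_i \partial_i f_3 \cdot$ and $\sum_i \partial_i f_5 \cdot$ relations give a second equation; combined with symmetry considerations one wants to conclude that the only interior critical point is the fully symmetric one $\vec{x}^\ast = (\frac{1}{K}, \ldots, \frac{1}{K})$, and then to check directly that $f^{(K)}(\frac{1}{K}, \ldots, \frac{1}{K}) \le \frac{1}{27}$ — indeed one expects it to be strictly less than $\frac{1}{27}$ for $K>3$, consistent with Lemma~\ref{lem-f3-max} giving $f_3$ the value $\frac{1}{24}(1-\frac{1}{K^2})$ there and the $-\alpha f_5$ term subtracting a positive quantity. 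Ruling out asymmetric interior critical points is the delicate part: the natural route is to exploit the rotational symmetry together with the combinatorial structure of which index-triples and index-quintuples (those with consecutive odd differences) appear in $f_3$ and $f_5$, perhaps by a convexity or rearrangement argument on the partial derivatives, or by reducing via a clever substitution to a one- or two-variable optimization that can be handled analytically. I anticipate that this analytic/combinatorial classification of interior critical points is exactly the "most technically challenging part" flagged in the introduction, and that it is deferred to Section~\ref{sec-f-max-induction} via the auxiliary Lemma~\ref{lem-f-max-induction}.
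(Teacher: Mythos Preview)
Your overall argument---induction on $K$, reducing boundary maxima to the $(K-2)$-case via Lemma~\ref{lem-f-properties}, and isolating interior maxima as the crux---is exactly the paper's proof of Lemma~\ref{lem-f-max}. Your explicit lower-bound witness $x_0=x_1=x_2=\tfrac13$ is correct and a welcome addition (the paper leaves the lower bound implicit in the base case plus the boundary embedding).

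Your speculation about how to handle the interior case, however, diverges from what the paper actually does. You propose to classify interior critical points via the Lagrange conditions and hope that the only one is the barycenter $(\tfrac1K,\ldots,\tfrac1K)$, after which one would just evaluate $f^{(K)}$ there. The paper makes no such uniqueness claim, and there is no evident reason to expect it to hold. Instead, Lemma~\ref{lem-f-max-induction} argues by contradiction: assuming an interior \emph{local maximum} $\vec{v}$ with $f^{(K)}(\vec{v})>\tfrac{1}{27}$, it uses both the vanishing of the first directional derivative \emph{and} the nonpositivity of the second directional derivative of $f^{(K)}$ along the specific direction $(-1,0,1,0,\ldots,0)$ and all its rotations. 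These produce a rotation-invariant scalar $c$ (equation~\eqref{eq-proof-c-def}) and a family of quadratic inequalities (equation~\eqref{eq-d2-nonpos}); summing suitably weighted rotations and applying combinatorial identities relating the sums to $f_3$ and $f_5$ eventually yields the numerical bound $\alpha<19.7$, contradicting $\alpha=24$. The second-order information is essential---first-order Lagrange conditions alone cannot distinguish local maxima from saddles---and at no point are critical points located or classified.
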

With this in hand our main result follows:
\begin{proof}[Proof of Theorem~\ref{thm-main}]
Immediate by combining Lemmas \ref{lem-drop-f}~and~\ref{lem-f-max}.
\end{proof}
It remains to prove Lemmas~\ref{lem-drop-f}~and~\ref{lem-f-max}.

\subsection{Proof of Lemma~\ref{lem-drop-f}} \label{sub-drop-f} 

Towards Lemma~\ref{lem-drop-f} we show:

\begin{lemma}[Lyapunov function $V_5$] \label{lem-incr-f5}
Define $V_5 : Z \to [0, \infty)$ by $V_5(z) := 4 N^2 f_5(\vec{g}(z)/N)$.
Let $K \ge 5$ and $z \in Z$ and denote by $z'$ the random successor configuration of~$z$.
Then 
\[
\E(V_5(z') \mid z) = V_5(z) + \frac{1}{32} \frac{(K-1)(K-3)}{N^2} - \frac{1}{2} (K - 3) f_3\left(\frac{\vec{g}(z)}{N}\right)\,.
\]
\end{lemma}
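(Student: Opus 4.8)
The plan is to compute $\E(V_5(z')\mid z)$ by expanding $f_5^{(K)}$ along a single step of the chain and then reorganising the resulting sum combinatorially. First I would encode one synchronous step by independent fair coins $c_1,\dots,c_K\in\{0,1\}$, where $c_i=1$ means token~$i$ advances, and set $\delta_i:=c_{i+1}-c_i$ with indices read cyclically, so that each gap $g_i$ of $\vec g(z)$ is updated to $g_i+\delta_i$. A collision between tokens $i$ and $i+1$ occurs exactly when $g_i=1$ and $\delta_i=-1$, i.e.\ precisely when the updated value $g_i+\delta_i$ is $0$; a short check shows such zeros are never cyclically adjacent, so every vanishing updated gap is flanked by positive ones. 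Using the continuity property of Lemma~\ref{lem-f-properties}(b), which as remarked there holds verbatim for $f_5$, one sees that evaluating the polynomial $f_5^{(K)}$ on the \emph{formal} length-$K$ vector $(g_i+\delta_i)_i$ --- zeros included --- equals $f_5$ evaluated on the gap vector of the actual successor configuration, obtained by deleting each $0$ and merging its two neighbours; one applies the identity once per vanishing gap, which is legitimate because the zeros stay isolated after each merge, including around the wrap-around. This reduces the claim to the identity
\[
 \E_{\vec c}\Bigl[f_5^{(K)}\bigl((g_i+\delta_i)_i/N\bigr)\Bigr]
 \;=\; f_5^{(K)}\bigl(\vec g(z)/N\bigr)-\frac{K-3}{8N^2}\,f_3^{(K)}\bigl(\vec g(z)/N\bigr)+\frac{(K-1)(K-3)}{128\,N^4},
\]
which, multiplied by $4N^2$, is exactly the stated formula.

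Second, I would expand $\prod_{j\in S}(g_j+\delta_j)$ over each alternating $5$-subset $S$ and over all $T\subseteq S$, so that the only quantities to control are $\E_{\vec c}\bigl[\prod_{j\in R}\delta_j\bigr]$ with $R=S\setminus T$. Two facts drive the computation: (i)~the variables $\delta_j$ belonging to different maximal runs of consecutive integers inside $R$ are independent, so $\E\bigl[\prod_R\delta_j\bigr]$ factors as a product over the runs of $R$; and (ii)~the product of the $\delta_j$ over a run of length $\ell$ is antisymmetric under the measure-preserving involution $c_i\mapsto 1-c_i$ when $\ell$ is odd, hence has expectation $0$, while for even $\ell$ a direct computation gives $1,\,-\tfrac14,\,\tfrac1{16}$ for $\ell=0,2,4$. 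Since $|S|=5$ is odd, every $R$ with $|R|\in\{1,3,5\}$ contains an odd-length run and contributes $0$; thus $T=S$ (i.e.\ $R=\emptyset$) contributes only $\prod_{j\in S}g_j$, summing to $f_5^{(K)}(\vec g/N)\cdot N^5/N^5$, while the surviving corrections come from $|R|=2$ (so $|T|=3$, with $R$ necessarily a single consecutive pair, moment $-\tfrac14$) and from $|R|=4$ (so $|T|=1$, with $R$ necessarily a consecutive $4$-run or two disjoint consecutive pairs, moment $\tfrac1{16}$).

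Third, I would evaluate the two surviving contributions by counting. For $|T|=3$, reindex by the triple $T=S\setminus R$ --- which is itself an alternating $3$-subset, since deleting a consecutive pair merges two odd gaps and the intervening $1$ into an odd gap --- and observe that the number of alternating $5$-sets $S$ obtained by inserting a consecutive pair into a fixed $T$ is $\sum_{\text{arcs of }T}(\tau-1)/2=(K-3)/2$, where $\tau$ ranges over the three arc lengths of $T$ (the pair must sit strictly inside an arc and split it into two odd sub-gaps). This gives the $|T|=3$ contribution $-\tfrac14\cdot\tfrac{K-3}{2}\,f_3^{(K)}$, i.e.\ the $f_3$-term. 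For $|T|=1$, rotational symmetry (Lemma~\ref{lem-f-properties}(a)) makes the coefficient of each $g_m$ equal, namely $\tfrac1{16}$ times the number of alternating $5$-sets $S\ni m$ for which $S\setminus\{m\}$ has all runs even; writing out the cyclic gaps of $S$ around $m$ and separating the ``single $4$-run'' case from the ``two disjoint $2$-runs'' case yields $(K-3)/2$ and $\binom{(K-3)/2}{2}$ respectively, summing to $\binom{(K-1)/2}{2}=(K-1)(K-3)/8$; hence the $|T|=1$ contribution is $\tfrac1{16}\cdot\tfrac{(K-1)(K-3)}{8}$, which after the overall normalisation gives the constant term. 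Collecting the three pieces yields the displayed identity.

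The hard part will be the combinatorial bookkeeping of the last step, particularly the $|T|=1$ count, which requires pinning down exactly which one-element deletions from an alternating $5$-set preserve an all-even run structure; the reduction in the first step also needs care, since one must justify that the continuity identity may be applied repeatedly when several simultaneous collisions occur, including collision sites two positions apart and those straddling the wrap-around. By contrast, once the run-factorisation and the $c_i\mapsto 1-c_i$ symmetry are in place, the increment-moment evaluations of the second step are routine.
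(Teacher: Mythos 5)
Your proposal is correct and follows essentially the same route as the paper's proof: expand $f_5$ at the incremented gap vector, use that increment products vanish over odd-length runs and equal $(-\tfrac14)^{r}$ over even runs with independence across non-adjacent runs, and count the surviving degree-$3$ and degree-$1$ contributions to obtain $-\tfrac{K-3}{8}f_3$ and the constant $\tfrac{(K-1)(K-3)N}{128}$ before rescaling. The only differences are presentational: you derive the moments via the coin encoding and an involution argument and spell out the reduction to the formal length-$K$ vector with isolated zeros via Lemma~\ref{lem-f-properties}(b), a step the paper treats implicitly.
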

The proof in \iftechrep{Appendix~\ref{sec:incr-f5-proof}}{\cite{BGKOW15-techrep}} requires an analysis of correlations among the changes in gaps between tokens in each step of the protocol.
Using Lemma~\ref{lem-incr-f5} one can readily prove Lemma~\ref{lem-drop-f}: 
\begin{proof}[Proof of Lemma~\ref{lem-drop-f}]
For $K=3$ the statement follows from Lemma~\ref{lem-drop-f3}.
For $K \ge 5$ we have: 
\begin{align*}
\E(V(z') \mid z) 
& = \E( (V_3(z') - 24 V_5(z') ) \mid z) && \text{by the definitions} \\
& = \E(V_3(z') \mid z) - 24 \E(V_5(z') \mid z) && \text{linearity of expectation} \\
& = V_3(z) - \frac{K-1}{2} - 24 V_5(z) - \frac{3}{4} \frac{(K-1)(K-3)}{N^2} \\
& \ \qquad + 12 (K - 3) f_3\left(\frac{\vec{g}(z)}{N}\right) && \text{Lemmas \ref{lem-drop-f3}~and~\ref{lem-incr-f5}} \\
& \le V(z) - \frac{K-1}{2}  + 12 (K - 3) f_3\left(\frac{\vec{g}(z)}{N}\right) && \text{since $K \ge 3$} \\
& \le V(z) - \frac{K-1}{2} + \frac{K-3}{2} && \text{Lemma~\ref{lem-f3-max}} \\
& = V(z) - 1
\end{align*}
\end{proof}

\subsection{Proof of Lemma~\ref{lem-f-max}}  \label{sub-f-max} 

Towards Lemma~\ref{lem-f-max} we show:

\begin{lemma}[Local maxima of $f$] \label{lem-f-max-induction}
Let $K \ge 5$ and odd.
There is no $\vec{v} \in D^{(K)}$ in the interior of~$D^{(K)}$
 such that $\vec{v}$ is a local maximum and $f^{(K)}(\vec{v}) > \frac{1}{27}$.
\end{lemma}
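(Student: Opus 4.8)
The plan is to show that any interior critical point $\vec v$ of $f^{(K)}$ on $D^{(K)}$ with $f^{(K)}(\vec v) > \tfrac1{27}$ leads to a contradiction, by induction on the odd number $K \ge 5$. Since $\vec v$ lies in the interior, all coordinates $v_0, \ldots, v_{K-1}$ are strictly positive, and we may use Lagrange multipliers: there is $\lambda \in \R$ such that $\partial_i f^{(K)}(\vec v) = \lambda$ for every $i$. The first task is to extract useful structure from these $K$ equations together with the constraint $\sum_i v_i = 1$. A standard trick (used for $f_3$ in \cite{FengZhang14}) is Euler's identity for homogeneous polynomials: writing $f^{(K)} = f_3^{(K)} - \alpha f_5^{(K)}$ with $f_3^{(K)}$ homogeneous of degree $3$ and $f_5^{(K)}$ of degree $5$, we get $\sum_i v_i \partial_i f^{(K)}(\vec v) = 3 f_3^{(K)}(\vec v) - 5\alpha f_5^{(K)}(\vec v) = \lambda$. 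Combining this with $f^{(K)}(\vec v) = f_3^{(K)}(\vec v) - \alpha f_5^{(K)}(\vec v)$ lets us express $\lambda$, $f_3^{(K)}(\vec v)$ and $f_5^{(K)}(\vec v)$ in terms of $f^{(K)}(\vec v)$ and one more scalar invariant (say $\alpha f_5^{(K)}(\vec v)$), reducing the problem to controlling these few quantities.

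Next I would exploit the rotational symmetry from Lemma~\ref{lem-f-properties}(a) and the continuity/degeneration property (b). The symmetry makes the fully equidistant point $\vec e := (\tfrac1K, \ldots, \tfrac1K)$ a natural reference; by Lemma~\ref{lem-f3-max} the maximum of $f_3^{(K)}$ is $\tfrac1{24}(1 - \tfrac1{K^2})$, so on the boundary and at $\vec e$ the value of $f_3^{(K)}$ is pinned down, and since $f_5^{(K)} \ge 0$ we get $f^{(K)}(\vec e) \le \tfrac1{24}(1-\tfrac1{K^2}) < \tfrac1{27}$ already rules out $\vec e$. The real content is interior points $\vec v \ne \vec e$. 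Here property (b) is the inductive lever: it says $f^{(K)}$ restricted to a face where one coordinate vanishes equals $f^{(K-2)}$ on the merged simplex, so the induction hypothesis bounds $f^{(K)}$ by $\tfrac1{27}$ on the entire boundary $\partial D^{(K)}$ — indeed on faces where \emph{any} coordinate is zero, by symmetry. Thus a putative interior local maximum with value exceeding $\tfrac1{27}$ would have to be a genuine interior extremum strictly above the boundary values.

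The remaining — and hardest — step is to rule out such an interior point by direct analysis of the Lagrange system. I expect to manipulate the equations $\partial_i f^{(K)}(\vec v) = \partial_j f^{(K)}(\vec v)$ into a form that forces the $v_i$ to satisfy strong algebraic relations; differences of the partial derivatives $\partial_i f^{(K)} - \partial_j f^{(K)}$ factor through expressions involving $v_i - v_j$ and sums of products of the other coordinates, and one wants to conclude that either many coordinates are equal (pushing $\vec v$ toward $\vec e$, already excluded) or that the value $f^{(K)}(\vec v)$ is forced to be $\le \tfrac1{27}$. Concretely, I would try to bound $f_5^{(K)}(\vec v)$ from below in terms of $f_3^{(K)}(\vec v)$ at a critical point — a reverse inequality to the trivial $f_5^{(K)} \ge 0$ — since the Lagrange conditions couple the cubic and quintic parts; combined with the algebraic identities from the Euler step and the bound $f_3^{(K)}(\vec v) \le \tfrac1{24}(1 - \tfrac1{K^2})$ from Lemma~\ref{lem-f3-max}, this should pin $f^{(K)}(\vec v) = f_3^{(K)}(\vec v) - 24 f_5^{(K)}(\vec v)$ below $\tfrac1{27}$. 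The main obstacle is precisely obtaining that lower bound on $f_5^{(K)}$ at critical points with enough uniformity in $K$; this is where the combinatorial structure of which index triples and quintuples appear in $f_3^{(K)}$ and $f_5^{(K)}$ (the parity-of-gaps condition) must be used in an essential way rather than treated as a black box, and I anticipate it requires a careful case analysis or a clever summation identity relating the two polynomials and their gradients.
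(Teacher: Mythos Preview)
Your proposal is a plan rather than a proof, and the plan has a genuine gap: you use only the first-order Lagrange conditions, whereas the paper's argument hinges on a \emph{second-order} (Hessian) condition. At an interior local maximum the second derivative of $f^{(K)}$ in the tangent direction $\vec d = (-1,0,1,0,\ldots,0)$ must be nonpositive; this yields the concrete inequality
\[
\sum_{\substack{3\le i_3<i_4<K\\ i_3\ \text{odd},\ i_4\ \text{even}}} v_{i_3} v_{i_4} \;\le\; \frac{1}{\alpha},
\]
valid for every rotation. Summed over rotations this bounds a weighted sum of the correlations $S_j(\vec v)=\sum_i v_i v_{i+j}$ by $K/\alpha$. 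The first-order condition is used too, but in a specific form: it gives a rotation-invariant scalar $c$ expressible as a sum of $v_{i}$'s minus $\alpha$ times a sum of cubic monomials, and multiplying this $c$ by carefully chosen quadratic weights and summing over rotations relates $c\cdot(\text{the same weighted }S_j\text{-sum})$ to $(K-3)f_3(\vec v)-(2K-5)\alpha f_5(\vec v)$. The second-order inequality is then invoked \emph{twice}: once to control that weighted $S_j$-sum, and once (inside the combinatorial step) to justify dropping the monomials that do not appear in $f_3$ or $f_5$ when multiplying out. Combining the two bounds with the trivial upper bound $\alpha f_5(\vec v)<\tfrac1{216}$ (from $f_3\le\tfrac1{24}$ and $f>\tfrac1{27}$) forces $\alpha<19.7$, contradicting $\alpha=24$. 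Without the Hessian inequality you have no handle on the quadratic $S_j$ terms, and your hoped-for ``lower bound on $f_5$ at critical points'' has no obvious source.

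Two smaller points. First, the induction on $K$ you propose is misplaced: this lemma is proved \emph{directly} for every odd $K\ge 5$; the induction (boundary reduction via Lemma~\ref{lem-f-properties}(b)) is what derives Lemma~\ref{lem-f-max} \emph{from} this lemma, not part of its proof. Knowing the boundary values are $\le\tfrac1{27}$ does nothing to exclude an interior local maximum above $\tfrac1{27}$, as you yourself note. Second, your chain $f^{(K)}(\vec e)\le \tfrac1{24}(1-\tfrac1{K^2})<\tfrac1{27}$ is false at the last step (for $K=5$ the middle term is $\tfrac1{25}>\tfrac1{27}$); the conclusion $f^{(K)}(\vec e)<\tfrac1{27}$ happens to be true but needs the $-24 f_5$ correction, and in any case the equidistant point plays no role in the paper's argument.
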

The proof in Section~\ref{sec-f-max-induction}
  involves a combinatorial analysis of inequalities arising
  from conditions on the derivatives of~$f^{(K)}$.
Using Lemma~\ref{lem-f-max-induction} one can readily prove Lemma~\ref{lem-f-max}:
\begin{proof}[Proof of Lemma~\ref{lem-f-max}]
We proceed by induction on~$K$.
For the induction base we have $K=3$.
It is straightforward to check that the maximum of
$f^{(3)}(\vec{x}) = f_3^{(3)}(\vec{x}) = x_0 x_1 x_2$
is $f^{(3)}(\frac{1}{3},\frac{1}{3},\frac{1}{3}) = \frac{1}{27}$.

For the induction step we have $K \ge 5$.
Let $\vec{v} \in D^{(K)}$ with $f^{(K)}(\vec{v}) = \max_{\vec{x} \in D^{(K)}} f^{(K)}(\vec{x})$.
If $\vec{v}$ is in the interior of~$D^{(K)}$,
then by Lemma~\ref{lem-f-max-induction} we have $f^{(K)}(\vec{v}) \le \frac{1}{27}$.
If $\vec{v}$ is at the boundary of~$D^{(K)}$,
then $v_i = 0$ for some~$i$.
By Lemma~\ref{lem-f-properties}(a) we can assume that $v_1=0$.
Using Lemma~\ref{lem-f-properties}(b) the statement follows from the induction hypothesis.
\end{proof}

\section{Proof of Lemma~\ref{lem-f-max-induction}} \label{sec-f-max-induction}

In this section we prove Lemma~\ref{lem-f-max-induction}.
In Section~\ref{sub-derivatives} we state several properties that an interior local maximum of~$f^{(K)}$ would have to satisfy.
In Section~\ref{sub-proof-5} we prove Lemma~\ref{lem-f-max-induction} for $K=5$ for a first taste of the general argument.
In Section~\ref{sub-proof-7} we prove Lemma~\ref{lem-f-max-induction} for $K=7$
to illustrate some fine points that occur only for larger values of~$K$.
In Section~\ref{sub-combinatorics} we state some combinatorial facts needed for the general case.
Finally, in Section~\ref{sub-proof-f-max-induction} we prove Lemma~\ref{lem-f-max-induction}.

\subsection{Properties of an Interior Local Maximum} \label{sub-derivatives}

The following lemma is obtained by considering first and second derivatives of~$f$
evaluated at an interior local maximum.
\begin{lemma} \label{lem-derivatives}
Let $\vec{v}$ be a local maximum of~$f^{(K)}$ in the interior of~$D^{(K)}$ and define $c \in \R$ by
\begin{align} 
  c \ \ =
    \sum_{\substack{1<i_2<K\\\text{$i_2$ even}}} v_{i_2}
   \  - \quad
   \alpha   \hskip-1.5em
    \sum_{\substack{1<i_2<i_3<i_4<K\\\text{$i_2$, $i_4$ even}\\\text{$i_3$ odd}}}
    \hskip-1.5em
       v_{i_2} v_{i_3} v_{i_4}\,.
  \label{eq-proof-c-def}
  \end{align}
This expression holds for the same value of~$c$ if the indices are rotated by an arbitrary~$k$:
for all~$j$ the index~$i_j$ becomes $(i_j + k) \bmod K$. Further, we have 
\begin{align}
  \sum_{\substack{3\le i_3<i_4<K\\\text{$i_3$ odd}\\\text{$i_4$ even}}} \!\!\!\! v_{i_3} v_{i_4}
  \le
  \frac{1}{\alpha}\,.
  \label{eq-d2-nonpos}
\end{align}
Again, this inequality also holds when indices are rotated.
\end{lemma}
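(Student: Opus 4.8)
The plan is to derive everything from the standard first- and second-order optimality conditions on the simplex $D^{(K)}$. Since $\vec v$ is an interior local maximum of $f^{(K)}$ subject to the single linear constraint $\sum_i x_i = 1$, there is a Lagrange multiplier $\lambda$ with $\partial_i f^{(K)}(\vec v) = \lambda$ for every $i$, and the Hessian of $f^{(K)}$ restricted to the tangent space $\{\vec y : \sum_i y_i = 0\}$ is negative semidefinite at $\vec v$. The first step is to compute $\partial_i f^{(K)}$ explicitly. Writing $f^{(K)} = f_3^{(K)} - \alpha f_5^{(K)}$, the derivative $\partial_{i_1} f_3^{(K)}(\vec x)$ is the sum of $x_{i_0} x_{i_2}$ over admissible triples $i_0 < i_1 < i_2$ with $i_1 - i_0$ and $i_2 - i_1$ odd (with the usual cyclic reading), and similarly $\partial_{i_1} f_5^{(K)}$ is a sum of degree-4 monomials. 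The key observation is that by rotational symmetry (Lemma~\ref{lem-f-properties}(a)) it suffices to examine $\partial_1 f^{(K)}(\vec v)$; and because $i_1 - i_0$ odd forces $i_0$ even and $i_2 - i_1$ odd forces $i_2$ even when $i_1 = 1$, the derivative $\partial_1 f^{(K)}(\vec v)$ splits as a "left part" involving $v_{i_0}$ with $i_0$ even, $i_0 < 1$ cyclically, and a "right part" involving $v_{i_2}$ with $i_2$ even, $i_2 > 1$. Collecting the right part (the terms with both indices exceeding $1$) gives precisely the quantity $c$ in~\eqref{eq-proof-c-def}: the linear piece $\sum_{1 < i_2 < K,\ i_2 \text{ even}} v_{i_2}$ comes from triples $(1, i_2)$ of $f_3$ where the third slot wraps around to position $0$ (forced by parity and by $v$ summing to $1$, after absorbing $v_0 + v_1$-type bookkeeping), while the cubic piece with the minus sign and the factor $\alpha$ comes from the analogous quintuples in $f_5$. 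Making this bookkeeping precise—i.e.\ showing that the "contribution through position $0$" is exactly the stated $c$ and that the remaining, purely-interior contribution to $\partial_1 f^{(K)}(\vec v)$ is the same expression with indices shifted—is the routine but slightly delicate combinatorial core of part one.

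For the claim that $c$ is rotation-invariant: once we know $\partial_i f^{(K)}(\vec v) = \lambda$ for all $i$, subtract the stationarity equations for two consecutive indices. The difference $\partial_1 f^{(K)}(\vec v) - \partial_{K-1} f^{(K)}(\vec v) = 0$ telescopes, after substituting the left/right decomposition above, into a statement that the "$c$ computed at position $1$" equals the "$c$ computed at position $K-1$"; iterating, $c$ is independent of which index we single out. I would phrase this cleanly by first proving the general identity $\partial_{i+1} f^{(K)}(\vec v) = (\text{something depending only on }\vec v) - 2c_i$ (or an analogous relation), where $c_i$ is the rotation of $c$ by $i$, and then using $\partial_{i} f^{(K)}(\vec v)$ constant in $i$ to conclude $c_i$ is constant in $i$.

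The inequality~\eqref{eq-d2-nonpos} comes from the second-order condition. Evaluate the restricted Hessian on a carefully chosen tangent direction $\vec y$—the natural candidate is $\vec y = \vec e_1 - \vec e_3$ (or more precisely a direction supported on two odd-separated coordinates near position $1$), which lies in the tangent space since its coordinates sum to zero. The quadratic form $\vec y^\top (\nabla^2 f^{(K)}(\vec v)) \vec y \le 0$ unpacks to $-2 \partial_1 \partial_3 f^{(K)}(\vec v) \le 0$ after noting the diagonal second derivatives of $f^{(K)}$ vanish (every monomial is multilinear). Now $\partial_1 \partial_3 f_3^{(K)}(\vec v) = \sum v_{i}$ over admissible "third vertices" $i$, and more relevantly $\partial_1 \partial_3 f^{(K)}(\vec v) = \partial_1\partial_3 f_3^{(K)}(\vec v) - \alpha\, \partial_1\partial_3 f_5^{(K)}(\vec v)$; isolating the term with value $1$ (the coefficient of the pair $(1,3)$ inside the sum defining $c$, coming from $\sum_i v_i = 1$ plus a wrap-around argument) against $\alpha$ times the sum $\sum_{3 \le i_3 < i_4 < K,\ i_3 \text{ odd},\ i_4\text{ even}} v_{i_3} v_{i_4}$ rearranges $\partial_1\partial_3 f^{(K)}(\vec v) \ge 0$ into exactly~\eqref{eq-d2-nonpos}. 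Rotation-invariance of this inequality is then immediate from Lemma~\ref{lem-f-properties}(a).

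The main obstacle I anticipate is purely combinatorial/notational rather than conceptual: tracking the parity constraints and the cyclic "wrap-around through index $0$" correctly so that the telescoped derivative expressions land on the precise closed forms in~\eqref{eq-proof-c-def} and~\eqref{eq-d2-nonpos}, rather than on expressions differing by boundary terms. Choosing the right normalization (exploiting $\sum_i v_i = 1$ to convert a linear-in-$v$ "free index" sum into the constant $1$, and similarly reducing the quartic part of $f_5$'s second derivative) is where the bookkeeping must be done with care; once the correct decomposition of $\partial_i f^{(K)}$ into a $\vec v$-global term plus $\pm 2c_i$ is in hand, both the rotation-invariance of $c$ and the inequality follow mechanically from the first- and second-order conditions.
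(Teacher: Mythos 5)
Your overall strategy is the same as the paper's: test first- and second-order optimality conditions at the interior critical point along tangent directions supported on two coordinates at distance $2$, using multilinearity to kill the diagonal Hessian entries (the paper phrases this as the Taylor expansion of $f^{(K)}(\vec{v}+\epsilon\vec{d})$ with $\vec{d}=(-1,0,1,0,\ldots,0)$ rather than via a Lagrange multiplier, but that is cosmetic). The problem is that the two concrete steps you defer as ``bookkeeping'' are exactly the content of the lemma, and your sketch of how they would go is wrong in both places. For the inequality: the constant $1$ in \eqref{eq-d2-nonpos} has nothing to do with $\sum_i v_i=1$ or with any wrap-around through position $0$ (the sums defining $f_3^{(K)}$ and $f_5^{(K)}$ are over genuinely increasing index tuples, with no cyclic reading). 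It arises because the only monomial of $f_3^{(K)}$ containing both $x_0$ and $x_2$ is $x_0x_1x_2$, so $\partial_0\partial_2 f^{(K)}(\vec{x})=x_1\bigl(1-\alpha\sum v_{i_3}v_{i_4}\bigr)$ with the sum exactly as in \eqref{eq-d2-nonpos}; the second-order condition gives $\partial_0\partial_2 f^{(K)}(\vec{v})\ge 0$, and one divides by $v_1>0$ --- this division by an interior coordinate, which you never perform, is where interiority is used. In your rotated variant, $\partial_1\partial_3 f_3^{(K)}(\vec{v})$ is the single term $v_2$, not a sum over ``admissible third vertices,'' and the normalization-plus-wrap-around route you describe would not land on \eqref{eq-d2-nonpos}.

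For rotation-invariance of $c$, the identity you posit, $\partial_{i+1}f^{(K)}(\vec{v})=(\text{global term})-2c_i$, does not hold, so the proposed telescoping does not close. The fact that makes the argument work --- and the combinatorial core the paper actually verifies --- is that $c$ is the adjacent mixed partial, $c_{(i)}=\partial_i\partial_{i+1}f^{(K)}(\vec{v})$ (indices mod $K$), together with the cancellation that in $\partial_i f^{(K)}-\partial_{i+2}f^{(K)}$ all monomials not containing $x_{i+1}$ cancel; hence $\partial_i f^{(K)}(\vec{v})-\partial_{i+2}f^{(K)}(\vec{v})=v_{i+1}\bigl(c_{(i)}-c_{(i+1)}\bigr)$, and stationarity plus $v_{i+1}>0$ give $c_{(i)}=c_{(i+1)}$ for all $i$. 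Your description of where the terms of \eqref{eq-proof-c-def} come from (triples of $f_3$ ``wrapping around to position $0$,'' absorbing $v_0+v_1$ bookkeeping, using $\sum_i v_i=1$) misidentifies this mechanism, and since you explicitly leave the ``delicate combinatorial core'' unproved, the proposal as written has a genuine gap precisely at the step that distinguishes this lemma from a generic statement of optimality conditions.
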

For example, for $K=7$ we have
$c =  v_2 + v_4 + v_6 - \alpha ( v_2 v_3 v_4 + v_2 v_3 v_6 + v_2 v_5 v_6 + v_4 v_5 v_6 )
   =  v_1 + v_3 + v_5 - \alpha ( v_1 v_2 v_3 + v_1 v_2 v_5 + v_1 v_4 v_5 + v_3 v_4 v_5 )$.

\begin{proof}[Proof of Lemma~\ref{lem-derivatives}] The idea of the proof is as follows. We pick a particular direction in $D^{(K)}$,
namely $\vec{d}=(-1,0,1,0,0,\ldots,0)$,
and consider the function $f(\vec{v}+\epsilon \vec{d})$ as a univariate function of~$\epsilon$.
Since $\vec{v}$ is a local maximum, 
the first derivative must be zero and the second derivative must be nonpositive.
Exploiting the fact that $v_i > 0$ for all~$i$ holds in the interior,
we obtain \eqref{eq-proof-c-def}~and~\eqref{eq-d2-nonpos}, respectively. See \iftechrep{Appendix~\ref{app-derivatives}}{\cite{BGKOW15-techrep}} for the detailed proof.
\end{proof}

Let $S_j^{(K)}(\vec{x})$ denote
  the scalar product of $\vec{x}$ with a copy of itself rotated $j$ times:
\[
  S_j^{(K)}(\vec{x}) := \sum_{i=0}^{K-1} x_i x_{i+j}
\]
In all formulas it will be the case that the subscript of~$S$ is odd.
Also, the superscript will be omitted when unimportant or understood from context.


\newcommand{\stmtcordtwononpossummed}{
Let $\vec{v}$ be a local maximum of $f^{(K)}$ in the interior of~$D^{(K)}$.
Then the following inequality holds:
\[
  \sum_{\substack{1 \le i < K-2\\\text{$i$ odd}}} \frac{K-i-2}{2} S_i(\vec{v}) \quad \le \quad \frac{K}{\alpha}
\]
}
\begin{corollary} \label{cor-d2-nonpos-summed}
\stmtcordtwononpossummed
\end{corollary}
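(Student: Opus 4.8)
The statement to prove, Corollary~\ref{cor-d2-nonpos-summed}, is a "summed" version of inequality~\eqref{eq-d2-nonpos} from Lemma~\ref{lem-derivatives}. The plan is to start from the rotation-invariant form of~\eqref{eq-d2-nonpos}: for every rotation offset $k$ we have
\[
  \sum_{\substack{3\le i_3<i_4<K\\\text{$i_3$ odd},\ i_4\text{ even}}} v_{i_3+k}\,v_{i_4+k} \;\le\; \frac{1}{\alpha},
\]
where all index arithmetic is modulo $K$. Summing this over all $k \in \{0,\ldots,K-1\}$ gives the bound $K/\alpha$ on the right-hand side immediately. The work is entirely on the left-hand side: I must show that
\[
  \sum_{k=0}^{K-1}\ \sum_{\substack{3\le i_3<i_4<K\\\text{$i_3$ odd},\ i_4\text{ even}}} v_{i_3+k}\,v_{i_4+k}
  \;=\;
  \sum_{\substack{1\le i<K-2\\\text{$i$ odd}}} \frac{K-i-2}{2}\, S_i(\vec{v}).
\]

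**Key steps.** First I would reinterpret each term $v_{i_3+k}v_{i_4+k}$ as contributing to $S_i(\vec v)$ where $i$ is the "gap" $i_4-i_3$ taken appropriately; note $i_4-i_3$ is odd since $i_4$ is even and $i_3$ is odd, so only odd subscripts of $S$ arise, consistent with the remark after Lemma~\ref{lem-derivatives}. The main combinatorial step is a careful counting: for a fixed odd gap value $d = i_4 - i_3$ with $1 \le d$, I must count how many pairs $(i_3, i_4)$ with $3 \le i_3 < i_4 < K$, $i_3$ odd, $i_4$ even, satisfy $i_4 - i_3 = d$, and then each such pair, ranged over all $K$ rotations $k$, sweeps out exactly one full copy of $S_d(\vec v) = \sum_i v_i v_{i+d}$. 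So the coefficient of $S_d(\vec v)$ in the double sum is precisely the number of admissible pairs with that gap. Counting: given $d$ odd, we need $i_3 \in \{3,5,\ldots\}$ and $i_4 = i_3 + d < K$ with $i_4$ even — since $d$ is odd and $i_3$ is odd, $i_4$ is automatically even, so the only constraints are $i_3 \ge 3$ odd and $i_3 + d \le K-1$, i.e. $3 \le i_3 \le K-1-d$. The number of odd integers in $[3, K-1-d]$ is $(K-1-d-3)/2 + 1 = (K-d-2)/2$ when $K-1-d$ is odd (which it is, as $K$ is odd and $d$ is odd makes $K-1-d$ odd). That yields the coefficient $\frac{K-i-2}{2}$ with $i=d$, matching the claim, and the range of $i$ is exactly the odd integers with $1 \le i < K-2$ (for $i \ge K-2$ the coefficient would be $\le 0$ and no pairs exist). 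I would also double-check the boundary case $i_3 = 3$ minimum versus whether $i_3 = 1$ should be excluded — indeed \eqref{eq-d2-nonpos} explicitly starts at $i_3 \ge 3$, which is why the count begins there and produces $K-i-2$ rather than $K-i$ in the numerator.

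**Main obstacle.** The only real subtlety — and the step I would treat most carefully — is the index-arithmetic bookkeeping when passing from the "linear" index range $3 \le i_3 < i_4 < K$ in~\eqref{eq-d2-nonpos} to the cyclic sums $S_i(\vec v)$. One must verify that summing a fixed admissible pair $(i_3,i_4)$ over all $K$ cyclic shifts $k$ genuinely reconstructs the complete cyclic sum $S_{i_4-i_3}(\vec v)$ without double-counting or omission, and that the parity conditions ($i_3$ odd, $i_4$ even) are preserved correctly or else absorbed into the restriction to odd $d$. Concretely, the pair $(i_3+k \bmod K,\ i_4+k \bmod K)$ need not keep $i_3+k$ odd once we reduce mod $K$ (since $K$ is odd, adding $k$ flips parity depending on $k$), so I should phrase the argument as: Lemma~\ref{lem-derivatives} already tells us \eqref{eq-d2-nonpos} holds verbatim after each rotation, and the sum $\sum_k v_{a+k}v_{b+k} = S_{|b-a|}(\vec v)$ by definition of $S$, independent of parity of $a+k$; the parity conditions on $i_3, i_4$ only serve to pin down which \emph{gaps} $d$ appear and with what multiplicity. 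Once this is framed correctly the computation is the elementary count above, and the corollary follows.
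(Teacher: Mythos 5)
Your proposal is correct and follows essentially the same route as the paper's proof: sum the rotated form of~\eqref{eq-d2-nonpos} over all $K$ rotations, swap the order of summation so each fixed pair $(i_3,i_4)$ contributes a full $S_{i_4-i_3}(\vec{v})$, and count the $(K-d-2)/2$ admissible odd $i_3$ for each odd gap $d$. Your remark that the parity conditions only pin down which gaps occur (and that rotation mod odd $K$ flipping parities is harmless) matches the paper's implicit handling of the same point.
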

For example, for $K=11$ we have
  $4S_1(\vec{v}) + 3S_3(\vec{v}) + 2S_5(\vec{v}) +S_7(\vec{v}) \le 11/\alpha$.

\begin{lemma}[Bound for $f_5$] \label{lem-bound-on-f5}
Suppose that $\vec{v} \in D^{(K)}$ satisfies $f^{(K)}(\vec{v}) > \frac{1}{27}$.
Then $\alpha f_5(\vec{v}) < \frac{1}{216}$.
\end{lemma}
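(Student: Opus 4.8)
The plan is to derive the bound directly from the definition of~$f^{(K)}$ together with the already-established maximum of~$f_3^{(K)}$. Recall from~\eqref{eq-f} that $f^{(K)} = f_3^{(K)} - \alpha f_5^{(K)}$, so that $\alpha f_5^{(K)}(\vec{v}) = f_3^{(K)}(\vec{v}) - f^{(K)}(\vec{v})$. Thus it suffices to upper-bound $f_3^{(K)}(\vec{v})$ and to use the hypothesis $f^{(K)}(\vec{v}) > \frac{1}{27}$ to lower-bound the subtracted term.

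First I would invoke Lemma~\ref{lem-f3-max}, which gives $f_3^{(K)}(\vec{x}) \le \frac{1}{24}(1 - \frac{1}{K^2})$ for every $\vec{x} \in D^{(K)}$ and every odd $K \ge 3$; in particular $f_3^{(K)}(\vec{v}) < \frac{1}{24}$. Combining this with the assumption $f^{(K)}(\vec{v}) > \frac{1}{27}$ yields
\[
  \alpha f_5^{(K)}(\vec{v}) \;=\; f_3^{(K)}(\vec{v}) - f^{(K)}(\vec{v}) \;<\; \frac{1}{24} - \frac{1}{27} \;=\; \frac{27 - 24}{24 \cdot 27} \;=\; \frac{1}{216},
\]
which is exactly the claimed inequality.

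There is no real obstacle here: the statement is an immediate consequence of Lemma~\ref{lem-f3-max} and the arithmetic identity $\frac{1}{24} - \frac{1}{27} = \frac{1}{216}$ (the constant $\alpha = 24$ being chosen precisely so that $\frac{1}{\alpha}$ and $\frac{1}{27}$ interact cleanly). If one wanted, one could additionally note that $f_5^{(K)}(\vec{v}) \ge 0$ since $f_5^{(K)}$ is a sum of products of nonnegative coordinates, so the bound in fact sandwiches $\alpha f_5^{(K)}(\vec{v})$ in $[0, \frac{1}{216})$; but only the upper bound is needed for the intended application.
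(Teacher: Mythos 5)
Your proposal is correct and follows essentially the same argument as the paper: both write $\alpha f_5(\vec{v}) = f_3(\vec{v}) - f(\vec{v})$, bound $f_3(\vec{v}) \le \frac{1}{24}$ via Lemma~\ref{lem-f3-max}, and use the hypothesis $f(\vec{v}) > \frac{1}{27}$ to get $\frac{1}{24} - \frac{1}{27} = \frac{1}{216}$. No gaps to report.
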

\begin{proof}
By Lemma~\ref{lem-f3-max} we have $f_3(\vec{v}) \le \frac{1}{24}$ and hence
$
 \alpha f_5(\vec{v}) = f_3(\vec{v}) - f(\vec{v}) < \frac{1}{24} - \frac{1}{27} = \frac{1}{216}
$.
\end{proof}

\subsection{Proof of Lemma~\ref{lem-f-max-induction} for $K=5$} \label{sub-proof-5}

Let $K=5$. Then 
\begin{align*}
 f(\vec{x}) \ &= \ f_3(\vec{x}) - \alpha f_5(\vec{x}) 
            \ =  \
x_0 x_1 x_2 +
x_0 x_1 x_4 +
x_0 x_3 x_4 +
x_1 x_2 x_3 +
x_2 x_3 x_4 - \alpha x_0 x_1 x_2 x_3 x_4
\end{align*}
Towards a contradiction, suppose that there is a local maximum $\vec{v}$ with $f(\vec{v}) > \frac{1}{27}$ in the interior of~$D$. 
By~\eqref{eq-proof-c-def}, the value
\begin{equation} \label{eq-proof5-c-def}
 c \ = \ v_2 + v_4 - \alpha v_2 v_3 v_4
\end{equation}
is invariant under rotations. Indeed, $ v_{2+k} + v_{4+k} - \alpha v_{2+k} v_{3+k} v_{4+k} \equiv c$ for all~$k$,
  but we shall avoid explicitly mentioning rotations, for notational simplicity.
Summing~\eqref{eq-proof5-c-def} over all $K$~rotations we obtain:
\begin{equation} \label{eq-proof5-five-c}
 5 c \ = \ 2 - \alpha f_3(\vec{v})
\end{equation}
By~\eqref{eq-proof5-c-def} we have 
$v_0 v_1 c =
v_0 v_1 v_2 +
v_0 v_1 v_4 - \alpha f_5(\vec{v})$
and, summing over all $K$~rotations,
\begin{equation} \label{eq-proof5-c-sum}
 c S_1(\vec{v}) \ = \ 2 f(\vec{v}) - 3 \alpha f_5(\vec{v})
 \end{equation}
Moreover,
\[
 c S_1(\vec{v}) \ \mathop{\le}^\text{Cor.~\ref{cor-d2-nonpos-summed}} \ \frac{5 c}{\alpha} \ \mathop{=}^\text{\eqref{eq-proof5-five-c}} \
 \frac{2}{\alpha} - f_3(\vec{v})  \ = \
 \frac{2}{\alpha} - f(\vec{v}) - \alpha f_5(\vec{v}).
\]
Combining this with~\eqref{eq-proof5-c-sum} gives:
\[
 \frac{2}{\alpha} \quad \ge \quad 3 f(\vec{v}) - 2 \alpha f_5(\vec{v})
 \quad \mathop{\ge}^\text{Lemma~\ref{lem-bound-on-f5}} \quad \frac{3}{27} - 2 \cdot \frac{1}{216}
\]
This implies $\alpha \le 216/11 \approx 19.6$, which is a contradiction as required (since  $\alpha=24$).
\qed

\subsection{Proof of Lemma~\ref{lem-f-max-induction} for $K=7$} \label{sub-proof-7}

Let $K=7$. Towards a contradiction, we suppose again that there is a local maximum $\vec{v}$ with $f(\vec{v}) > \frac{1}{27}$ in the interior of~$D$. 
By~\eqref{eq-proof-c-def}, all $K$~rotations of the following hold with the same~$c\in\R$:
\begin{equation} \label{eq-proof7-c-def}
 c \ = \ v_2 + v_4 + v_6 - \alpha ( v_2 v_3 v_4 + v_2 v_3 v_6 + v_2 v_5 v_6 + v_4 v_5 v_6 )
\end{equation}
Summing~\eqref{eq-proof7-c-def} over $K$~rotations we obtain:
\begin{equation} \label{eq-proof7-five-c}
 7 c \ = \ 3 - 2 \alpha f_3(\vec{v})
\end{equation}
By~\eqref{eq-proof7-c-def} we have 
\begin{equation} \label{eq-proof7-mult-c1}
 v_0 v_1 c \;=\;
v_0 v_1 v_2 +
v_0 v_1 v_4 +
v_0 v_1 v_6
 - \alpha (
v_0 v_1 v_2 v_3 v_4 +
v_0 v_1 v_2 v_3 v_6 +
v_0 v_1 v_2 v_5 v_6 +
v_0 v_1 v_4 v_5 v_6
)
%
\end{equation}
and
\begin{equation} \label{eq-proof7-mult-c2}
\begin{aligned}
v_0 v_3 c
 \;&=\;
v_0 v_3 v_4 +
v_0 v_3 v_6
- \alpha v_0 v_3 v_4 v_5 v_6
+ v_0 v_2 v_3 ( 1 - \alpha ( v_3 v_4 + v_3 v_6 + v_5 v_6 ) ) \\
\;&\ge\;
v_0 v_3 v_4 +
v_0 v_3 v_6
- \alpha v_0 v_3 v_4 v_5 v_6
%
\end{aligned}
\end{equation}
where the last inequality is by~\eqref{eq-d2-nonpos}.
Summing \eqref{eq-proof7-mult-c1}~and~\eqref{eq-proof7-mult-c2} over $K$~rotations we obtain:
\begin{equation} \label{eq-proof7-c-sum}
 c \bigl(2 S_1(\vec{v}) + S_3(\vec{v})\bigr)
  \ \ge \ 
  4 f_3(\vec{v}) - 9 \alpha f_5(\vec{v}) = 4 f(\vec{v}) - 5 \alpha f_5(\vec{v})
\end{equation}
Further we have:
\[
 c \bigl(2 S_1(\vec{v}) + S_3(\vec{v})\bigr)
 \ \mathop{\le}^\text{Cor.~\ref{cor-d2-nonpos-summed}} \ 
 \frac{7 c}{\alpha} \ \mathop{=}^\text{\eqref{eq-proof7-five-c}} \
 \frac{3}{\alpha} - 2 f_3(\vec{v})  \ = \
 \frac{3}{\alpha} - 2 f(\vec{v}) - 2 \alpha f_5(\vec{v})
\]
Combining this with~\eqref{eq-proof7-c-sum} gives:
\[
 \frac{3}{\alpha} \quad \ge \quad 6 f(\vec{v}) - 3 \alpha f_5(\vec{v})
 \quad \mathop{\ge}^\text{Lemma~\ref{lem-bound-on-f5}} \quad \frac{6}{27} - 3 \cdot \frac{1}{216}
\]
This leads to $\alpha \le 14.4$, which is a contradiction as desired.
\qed

\subsection{Combinatorial Lemmas} \label{sub-combinatorics}

In order to generalize the proofs from Sections \ref{sub-proof-5}~and~\ref{sub-proof-7}
 to any odd~$K$, we state some combinatorial lemmas in this subsection.
They are proved in \iftechrep{Appendix~\ref{app-combin-proofs}}{\cite{BGKOW15-techrep}}.

In order to generalize \eqref{eq-proof5-five-c}~and~\eqref{eq-proof7-five-c}
we show the following lemma:

\newcommand{\stmtlemsumcombinsimple}{
We have:
\[
  \sum_{k=0}^{K-1}
  \sum_{\substack{1<i'_0<i'_1<i'_2<K\\\text{$i'_0$, $i'_2$ even}\\\text{$i'_1$ odd}}}
    x_{i'_0+k} x_{i'_1+k} x_{i'_2+k}
  =
  \frac{K-3}{2}
  \hskip-1em
  \sum_{\substack{0\le i_0<i_1<i_2<K\\\text{$i_2-i_1$, $i_1-i_0$ odd}}}
  \hskip-2em
    x_{i_0} x_{i_1} x_{i_2}
  =
  \frac{K-3}{2} f_3^{(K)}(\vec{x})
\]
}
\begin{lemma}\label{lem-sum-combin-simple}
\stmtlemsumcombinsimple
\end{lemma}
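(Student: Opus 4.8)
The plan is to compare the two sides as polynomials, monomial by monomial. Both $f_3^{(K)}$ and the left-hand sum are rotation-invariant homogeneous cubics in $x_0,\dots,x_{K-1}$ (indices mod~$K$), so it suffices to determine, for every three-element set $T=\{a,b,c\}\subseteq\mathbb{Z}_K$, the coefficient of $x_ax_bx_c$ on each side. On the right, sorting $T$ as $i_0<i_1<i_2$, the monomial occurs in $f_3^{(K)}$ with coefficient exactly~$1$ iff $i_1-i_0$ and $i_2-i_1$ are odd; since $K$ is odd, this is equivalent to all three cyclic gaps of~$T$ being odd. So the goal is to show that $x_ax_bx_c$ appears in the left-hand sum with coefficient $\frac{K-3}{2}$ when all cyclic gaps of~$T$ are odd, and with coefficient~$0$ otherwise.

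First I would dispatch the ``otherwise'' case: a summand $x_{i'_0+k}x_{i'_1+k}x_{i'_2+k}$ on the left corresponds to the set $\{i'_0+k,i'_1+k,i'_2+k\}$, whose cyclic gaps equal those of $\{i'_0,i'_1,i'_2\}$; the parity constraints ($i'_0,i'_2$ even, $i'_1$ odd) force $i'_1-i'_0$ and $i'_2-i'_1$ odd, and then $K$ odd forces the wrap-around gap odd as well. Hence every monomial on the left has all cyclic gaps odd, and the ``otherwise'' coefficient is~$0$.

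Next, fix $T$ with all cyclic gaps odd and count the pairs $(k,(i'_0,i'_1,i'_2))$ that produce $x_ax_bx_c$. For a fixed~$k$ the triple is forced: $\{i'_0,i'_1,i'_2\}$ must be $T-k$, taken with representatives in $\{0,\dots,K-1\}$ and sorted. Write $m(k):=\min(T-k)$. The constraint $1<i'_0$ is then exactly $m(k)\ge 2$; and since the two interior gaps $i'_1-i'_0$, $i'_2-i'_1$ are among the (odd) cyclic gaps of $T-k$, once $i'_0=m(k)$ is even the parities of $i'_1$ (odd) and $i'_2$ (even) follow automatically, while an odd $m(k)$ disqualifies the triple. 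So the sought coefficient equals $\#\{k\in\mathbb{Z}_K : m(k)\text{ is a positive even integer}\}$. Finally, listing~$T$ in cyclic order as $t_1,t_2,t_3$ with clockwise gaps $g_1,g_2,g_3$ (all odd, summing to~$K$): as $k$ ranges over the $g_i-1$ positions strictly between $t_i$ and $t_{i+1}$, $m(k)$ runs through $1,\dots,g_i-1$ exactly once (with $m(k)=0$ for the three $k\in T$). Since $g_i$ is odd, exactly $(g_i-1)/2$ of these are even, so the total is $\sum_{i=1}^3 (g_i-1)/2=(K-3)/2$, as required.

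The conceptual content is light; the one place that needs care --- and which I regard as the main obstacle --- is the bookkeeping in the third paragraph, namely checking that the awkward-looking constraint ``$1<i'_0<i'_1<i'_2<K$ with $i'_0,i'_2$ even and $i'_1$ odd'' on the forced triple $T-k$ collapses to the single clean condition ``$m(k)$ positive and even''. Once that reduction is in hand, the remaining count is an elementary summation over the three arcs of~$T$.
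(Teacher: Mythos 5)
Your proof is correct and takes essentially the same route as the paper's: fix a monomial whose cyclic gaps are all odd, note that for each shift $k$ the triple $(i'_0,i'_1,i'_2)$ is forced, and count the admissible $k$ arc by arc, obtaining $\sum_i (g_i-1)/2=(K-3)/2$. The only differences are cosmetic: you package the per-arc count via the parity of $m(k)=\min(T-k)$ rather than the paper's case split on which original index becomes smallest, and you spell out explicitly (which the paper leaves implicit) that no monomial outside $f_3^{(K)}$ appears on the left-hand side.
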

For example, if $K=5$, then we obtain that summing the $5$~rotations of
  $x_2 x_3 x_4$
gives $f_3^{(5)}(\vec{x})$.
As another example, if $K=7$, then we obtain that summing the $7$~rotations of
$
  x_2 x_3 x_4
  + x_2 x_3 x_6
  + x_2 x_5 x_6
  + x_4 x_5 x_6
$
gives $2 f_3^{(7)}(\vec{x})$.
These two instances of Lemma~\ref{lem-sum-combin-simple}
  help establish \eqref{eq-proof5-five-c}~and~\eqref{eq-proof7-five-c}.

In order to generalize the inequality in~\eqref{eq-proof7-mult-c2}
we need the following lemma:
\newcommand{\stmtlemdropterms}{
Let $\vec{v}$ be a local maximum of $f^{(K)}$ in the interior of~$D^{(K)}$.
If $i_1$ is odd and $0<i_1<K$, then the following inequality holds:
\[
v_0 v_{i_1}
  \biggl(
    \sum_{\substack{1<i_2<K\\\text{$i_2$ even}}}
      v_{i_2}
    -
    \hskip-1em
    \sum_{\substack{1<i_2<i_3<i_4<K\\\text{$i_2,i_4$ even}\\\text{$i_3$ odd}}}
    \hskip-2em
      \alpha v_{i_2} v_{i_3} v_{i_4}
  \biggr)
  \ge
v_0 v_{i_1}
  \biggl(
    \sum_{\substack{i_1<i_2<K\\\text{$i_2$ even}}}
      v_{i_2}
    -
    \hskip-1em
    \sum_{\substack{i_1<i_2<i_3<i_4<K\\\text{$i_2,i_4$ even}\\\text{$i_3$ odd}}}
    \hskip-2em
      \alpha v_{i_2} v_{i_3} v_{i_4}
  \biggr)
\]
}
\begin{lemma}\label{lem-drop-terms}
\stmtlemdropterms
\end{lemma}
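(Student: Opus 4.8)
The plan is to show that the difference between the left-hand side and the right-hand side of the claimed inequality is a sum of manifestly nonnegative terms. Since $f^{(K)}$ is rotation-symmetric and $\vec{v}$ is an interior local maximum, the proof of Lemma~\ref{lem-derivatives} gives $v_i>0$ for all $i$; in particular the common prefactor $v_0 v_{i_1}$ is strictly positive, so it suffices to compare the two bracketed expressions.

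First I would subtract the two brackets. The single sums contribute $\sum_{1<i_2\le i_1,\ i_2\text{ even}} v_{i_2}$, and the triple sums contribute $\alpha\sum v_{i_2}v_{i_3}v_{i_4}$ over the triples with $1<i_2\le i_1$, $i_2<i_3<i_4<K$, $i_2,i_4$ even, $i_3$ odd. Since $i_1$ is odd while $i_2$ is even, the condition $1<i_2\le i_1$ is the same as $2\le i_2<i_1$, so the difference of the two brackets equals
\[
\sum_{\substack{1<i_2<i_1\\ i_2\text{ even}}} v_{i_2}\Biggl(1-\alpha\!\!\sum_{\substack{i_2<i_3<i_4<K\\ i_3\text{ odd},\ i_4\text{ even}}}\!\! v_{i_3}v_{i_4}\Biggr).
\]
Now for each fixed even $i_2\ge 2$, every pair $(i_3,i_4)$ occurring in the inner sum satisfies $3\le i_3<i_4<K$ with $i_3$ odd and $i_4$ even, so the inner sum is a subsum, with nonnegative terms, of the left-hand side of~\eqref{eq-d2-nonpos}. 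Hence
\[
\alpha\!\!\sum_{\substack{i_2<i_3<i_4<K\\ i_3\text{ odd},\ i_4\text{ even}}}\!\! v_{i_3}v_{i_4}\ \le\ \alpha\!\!\sum_{\substack{3\le i_3<i_4<K\\ i_3\text{ odd},\ i_4\text{ even}}}\!\! v_{i_3}v_{i_4}\ \le\ 1,
\]
so every summand in the previous display is nonnegative (using $v_{i_2}\ge 0$). Multiplying the display by $v_0 v_{i_1}\ge 0$ yields the lemma. (When $i_1=1$ the index set is empty and the two sides coincide, which is consistent.)

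I do not expect a serious obstacle here: the only subtlety is the parity bookkeeping — recognising that the index sets of the left-hand side's sums are obtained from those of the right-hand side's by adjoining precisely the even values $1<i_2<i_1$, and that this extra piece decomposes into one nonnegative contribution per such $i_2$, each controlled by a single (unrotated) use of~\eqref{eq-d2-nonpos}, since the indices $i_3,i_4$ involved already lie in $\{3,\dots,K-1\}$.
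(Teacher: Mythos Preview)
Your argument is correct and essentially identical to the paper's own proof: both compute the difference as $\sum_{1<i_2<i_1,\ i_2\text{ even}} v_{i_2}\bigl(1-\alpha\sum_{i_2<i_3<i_4<K} v_{i_3}v_{i_4}\bigr)$ (up to the harmless $v_0v_{i_1}$ prefactor) and bound each inner sum using~\eqref{eq-d2-nonpos}. The paper keeps the prefactor throughout rather than stripping it off first, but the decomposition and the single use of~\eqref{eq-d2-nonpos} per even $i_2$ are the same.
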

The inequality says that if we drop those terms that do not occur in $f_3^{(K)}$ or $f_5^{(K)}$,
  then we obtain a lower bound.
The proof groups those terms that are not in either of $f_3^{(K)}$ or $f_5^{(K)}$,
  and then invokes~\eqref{eq-d2-nonpos} to show that their sum is nonnegative.

In order to generalize \eqref{eq-proof5-c-sum}~and~\eqref{eq-proof7-c-sum}
we need Corollary~\ref{cor-fancy-sum} below,
which is a consequence of the following lemma:
\newcommand{\stmtlemfancysum}{
Let $l$ be an odd, positive integer.
Then:
\begin{multline*}
  \sum_{k=0}^{K-1}
  \sum_{\substack{1\le i'_1< K-2\\\text{$i'_1$ odd}}}
    \frac{K-i'_1-2}{2}
  \hskip-1em
  \sum_{\substack{%
    i'_1<i'_2<\cdots<i'_{l-1}<K\\
    \forall j,\; i'_j \equiv j \pmod 2
  }}
    x_k x_{i'_1+k}
    \prod_{\substack{1<j<l}}
      x_{i'_j+k}
  =
\\
  =
  \Bigl( \frac{l-1}{2} K - l \Bigr)
  \hskip-1em
  \sum_{\substack{
    0\le i_0<\cdots<i_{l-1}<K\\
    \text{$i_j-i_{j-1}$ odd for $0<j<l$}}}
  \prod_{j=0}^{l-1}
    x_{i_j}
\end{multline*}
}
\begin{lemma}\label{lem-fancy-sum}
\stmtlemfancysum
\end{lemma}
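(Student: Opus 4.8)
\textbf{Proof proposal for Lemma~\ref{lem-fancy-sum}.}

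The plan is to prove the identity by a double-counting argument that tracks, for each monomial $\prod_{j=0}^{l-1} x_{i_j}$ appearing on the right-hand side, exactly with what total coefficient it is produced on the left-hand side. Fix such a monomial, so $0\le i_0<i_1<\cdots<i_{l-1}<K$ with consecutive differences odd. On the left we sum over a rotation index $k$, a choice of the ``first gap'' parameter $i'_1$ (odd, with $1\le i'_1<K-2$, carrying weight $\tfrac{K-i'_1-2}{2}$), and then over a nested tuple $i'_1<i'_2<\cdots<i'_{l-1}<K$ with the parity constraint $i'_j\equiv j\pmod 2$; the monomial produced is $x_k\,x_{i'_1+k}\prod_{1<j<l}x_{i'_j+k}$, i.e. the variables are $\{k, i'_1+k, i'_2+k,\dots,i'_{l-1}+k\}\pmod K$. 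So I would first determine, given the target index set $\{i_0,\dots,i_{l-1}\}$, which triples $(k, i'_1, (i'_2,\dots,i'_{l-1}))$ map onto it. The rotation $k$ must equal one of the $i_m$ (the image of the first factor $x_k$), and once $k=i_m$ is chosen, the remaining indices $i'_1+k,\dots$ are forced to be the cyclic successors $i_{m+1},\dots$ read around the ring starting just after position $m$; in particular $i'_1 = (i_{m+1}-i_m)\bmod K$, which is a sum of consecutive odd gaps, hence of the correct parity, and the larger $i'_j$ are likewise determined. The parity condition $i'_j\equiv j\pmod 2$ is automatically satisfied because each $i'_j$ is a sum of $j$ consecutive (odd) gaps. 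Thus for each of the $l$ choices $m\in\{0,\dots,l-1\}$ of which original index plays the role of $k$, there is exactly one contributing term on the left, and it contributes weight $\tfrac{K-i'_1-2}{2}$ where $i'_1=i'_1(m)=(i_{m+1}-i_m)\bmod K$ (indices of $i$ taken cyclically, and for $m=l-1$ the ``next'' index wraps to $i_0$).

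The heart of the argument is then the purely arithmetic claim that, for \emph{every} admissible target tuple,
\[
  \sum_{m=0}^{l-1} \frac{K - i'_1(m) - 2}{2} \;=\; \frac{l-1}{2}K - l.
\]
To see this, note $\sum_{m=0}^{l-1} i'_1(m) = \sum_{m=0}^{l-1}\bigl((i_{m+1}-i_m)\bmod K\bigr)$ is the sum of all $l$ cyclic gaps between consecutive chosen indices around the ring of size $K$, which telescopes to $K$. Hence $\sum_m \tfrac{K-i'_1(m)-2}{2} = \tfrac{1}{2}\bigl(lK - K - 2l\bigr) = \tfrac{l-1}{2}K - l$, independent of the particular tuple. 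This is exactly the coefficient on the right-hand side, so the two sides agree monomial by monomial, proving the identity. I would also double-check the boundary bookkeeping: the constraint $1\le i'_1 < K-2$ on the left should correspond precisely to the $i'_1(m)$ that can actually arise, namely $i'_1(m)\ge 1$ always (the gap is a nonempty sum of odd numbers) and $i'_1(m)\le K-2$ because there are at least $l-1\ge 1$ other chosen indices in the complementary arc, which, being an odd distance away, forces $i'_1(m)\le K-2$; when $l=1$ the statement degenerates (the inner product over $1<j<l$ is empty and there is no $i'_1$ variable, both sides being $0$), and one treats $l\ge 3$ as the real content, with $l$ odd ensuring the parity constraints are consistent.

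The step I expect to be the main obstacle is making the bijection in the first paragraph fully rigorous: one must verify that the map $(k,i'_1,i'_2,\dots,i'_{l-1})\mapsto \{k,i'_1+k,\dots,i'_{l-1}+k\}$ restricted to tuples satisfying \emph{all} the stated constraints (strict nesting $i'_1<\cdots<i'_{l-1}<K$, the bound $i'_1<K-2$, and the parity pattern) is exactly $l$-to-one onto the target monomials with consecutive-odd-difference index sets, with the $l$ preimages indexed by the choice of which target index is $k$. The subtle points are (i) that the parity pattern $i'_j\equiv j\pmod 2$ on the left is equivalent, under the correspondence, to the ``consecutive differences odd'' condition defining $f$-type monomials on the right, and (ii) that no preimage is lost or double-counted at the boundary $i'_1 = K-2$ or $i'_{l-1}=K-1$. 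Once the combinatorial correspondence and the telescoping sum above are nailed down, the identity follows by comparing coefficients, and Corollary~\ref{cor-fancy-sum} is then a direct instantiation used to generalize \eqref{eq-proof5-c-sum} and~\eqref{eq-proof7-c-sum}.
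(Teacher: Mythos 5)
Your proposal is correct and takes essentially the same route as the paper's proof: fix a monomial on the right-hand side, note that each of the $l$ choices of which index plays the role of $k$ uniquely determines the tuple $(i'_1,\dots,i'_{l-1})$ with weight $\frac{K-\delta_m-2}{2}$ for the corresponding cyclic gap $\delta_m$, and sum using $\delta_0+\cdots+\delta_{l-1}=K$. The boundary case $i'_1=K-2$ that you flag is harmless because its weight factor vanishes, so your coefficient count agrees with the paper's.
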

For example, if $K=5$ and $l=3$, then we have that summing $5$~rotations of
$
  x_0 x_1 x_2 + x_0 x_1 x_4
$
gives $2f_3^{(5)}(\vec{x})$.
As another example, if $K=9$ and $l=3$, then summing $9$~rotations of
$
  3x_0x_1(x_2+x_4+x_6+x_8) + 2x_0x_3(x_4+x_6+x_8) + x_0x_5(x_6+x_8)
$
gives $6 f_3^{(9)}(\vec{x})$.

\begin{corollary}\label{cor-fancy-sum}
We have:
\[
  \sum_{k=0}^{K-1}
  \sum_{\substack{1 \le i_1 < K-2\\\text{$i_1$ odd}}}
     \frac{K-i_1-2}{2}
  \sum_{\substack{i_1<i_2<K\\\text{$i_2$ even}}}
    x_{0+k} x_{i_1+k} x_{i_2+k}
  =
  (K-3) f_3^{(K)}(\vec{x})
\]
and also
\[
  \sum_{k=0}^{K-1}
  \sum_{\substack{1 \le i_1 < K-2\\\text{$i_1$ odd}}}
     \frac{K-i_1-2}{2}
  \sum_{\substack{i_1<i_2<i_3<i_4<K\\\text{$i_2,i_4$ even}\\\text{$i_3$ odd}}}
  x_{0+k} x_{i_1+k}
  x_{i_2+k} x_{i_3+k} x_{i_4+k}
  =
  (2K-5) f_5^{(K)}(\vec{x})
\]
\end{corollary}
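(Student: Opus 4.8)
The plan is to obtain both identities as immediate specializations of Lemma~\ref{lem-fancy-sum}, which is formulated for an arbitrary odd positive integer~$l$: the first assertion of the corollary is the case $l=3$ and the second is the case $l=5$. So the only work is to check that, under these substitutions, every ingredient of Lemma~\ref{lem-fancy-sum} degenerates into exactly what is written in the corollary.

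First I would set $l=3$. The inner index constraints $i'_j \equiv j \pmod 2$ (for $j\in\{1,2\}$) become ``$i'_1$ odd and $i'_2$ even'', matching the condition ``$i_2$ even'' in the corollary (with $i_1=i'_1$ already forced odd by the outer sum). The monomial $x_k\, x_{i'_1+k} \prod_{1<j<3} x_{i'_j+k}$ reduces to $x_{0+k}\, x_{i'_1+k}\, x_{i'_2+k}$, and the coefficient $\frac{l-1}{2}K - l$ evaluates to $K-3$. On the right, the sum over $0\le i_0<i_1<i_2<K$ with $i_1-i_0$ and $i_2-i_1$ odd is precisely $f_3^{(K)}(\vec{x})$ by definition, so Lemma~\ref{lem-fancy-sum} yields the first identity verbatim.

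Next I would set $l=5$. Now $i'_j \equiv j \pmod 2$ for $j\in\{1,2,3,4\}$ reads ``$i'_1,i'_3$ odd and $i'_2,i'_4$ even'', matching ``$i_2,i_4$ even, $i_3$ odd'' in the corollary; the monomial becomes $x_{0+k}\, x_{i'_1+k}\, x_{i'_2+k}\, x_{i'_3+k}\, x_{i'_4+k}$; the coefficient $\frac{l-1}{2}K - l$ becomes $2K-5$; and the right-hand sum is $f_5^{(K)}(\vec{x})$ by definition. This gives the second identity.

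I expect no genuine obstacle: all of the combinatorial substance (the rearrangement of rotated monomials and the counting of multiplicities) is already contained in Lemma~\ref{lem-fancy-sum}, so here it is a matter of routine bookkeeping. The single point worth verifying carefully is that the outer range ``$1\le i_1 < K-2$, $i_1$ odd'' in the corollary coincides exactly with ``$1\le i'_1 < K-2$, $i'_1$ odd'' in the lemma, so that the specialization neither introduces nor suppresses any boundary term.
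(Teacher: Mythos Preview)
Your proposal is correct and follows exactly the same approach as the paper, which simply instantiates Lemma~\ref{lem-fancy-sum} with $l=3$ and $l=5$. Your additional bookkeeping checks (parity conditions, the value of $\frac{l-1}{2}K-l$, and the matching of the right-hand sums with $f_3^{(K)}$ and $f_5^{(K)}$) are all accurate.
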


\begin{proof}
Instantiate Lemma~\ref{lem-fancy-sum} with $l=3$ and, respectively, $l=5$.
\end{proof}

\subsection{Proof of Lemma~\ref{lem-f-max-induction}} \label{sub-proof-f-max-induction}

Towards a contradiction, suppose that there is a local maximum $\vec{v}$ with $f(\vec{v}) > \frac{1}{27}$ in the interior of~$D$, i.e., $v_i > 0$ for all~$i \in \{0, \ldots, K-1\}$.
Summing up the $K$ rotations of~\eqref{eq-proof-c-def}  and using Lemma~\ref{lem-sum-combin-simple}, we obtain:
\begin{equation} \label{eq-proof-Mc}
 K c \ = \ \frac{K-1}{2} - \frac{K-3}{2} \alpha f_3(\vec{v})
\end{equation}
Multiplying \eqref{eq-proof-c-def} on both sides by
$
    \sum_{\substack{1 \le i_1 < K-2\\\text{$i_1$ odd}}} \frac{K-i_1-2}{2} v_0 v_{i_1}
$
we obtain:
\begin{align*}
c  \sum_{\substack{1 \le i_1 < K-2\\\text{$i_1$ odd}}}  \frac{K-i_1-2}{2} v_0 v_{i_1}  &=
  \sum_{\substack{1 \le i_1 < K-2\\\text{$i_1$ odd}}} \frac{K-i_1-2}{2} v_0 v_{i_1}
  \Biggl(
    \sum_{\substack{1<i_2<K\\\text{$i_2$ even}}} v_{i_2}
    -
    \hskip-1em
    \sum_{\substack{1<i_2<i_3<i_4<K\\\text{$i_2$, $i_4$ even}\\\text{$i_3$ odd}}}
    \hskip-1.5em
      \alpha v_{i_2} v_{i_3} v_{i_4}
  \Biggr) \\
  & \ge
  \sum_{\substack{1 \le i_1 < K-2\\\text{$i_1$ odd}}} \frac{K-i_1-2}{2} v_0 v_{i_1}
  \Biggl(
    \sum_{\substack{i_1<i_2<K\\\text{$i_2$ even}}} v_{i_2}
    -
    \hskip-1em
    \sum_{\substack{i_1<i_2<i_3<i_4<K\\\text{$i_2$, $i_4$ even}\\\text{$i_3$ odd}}}
    \hskip-1.5em
      \alpha v_{i_2} v_{i_3} v_{i_4}
  \Biggr)
\end{align*}
using Lemma~\ref{lem-drop-terms}. Summing $K$~rotations of this inequality yields:
\begin{equation} \label{eq-proof-c-sum}
\begin{aligned}
  c \!\!  \sum_{\substack{1 \le i_1 < K-2\\\text{$i_1$ odd}}} \frac{K-i_1-2}{2} S_{i_1}(\vec{v})
  \ &\ge\
  (K-3) f_3(\vec{v}) - (2K-5) \alpha f_5(\vec{v}) = (K-3) f(\vec{v}) - (K-2) \alpha f_5(\vec{v}).
\end{aligned}
\end{equation}
using Corollary~\ref{cor-fancy-sum}. Further we have:
\[
  c  \sum_{\substack{1 \le i_1 < K-2\\\text{$i_1$ odd}}} \frac{K-i_1-2}{2} S_{i_1}(\vec{v})
  \ \mathop{\le}^\text{Cor.~\ref{cor-d2-nonpos-summed}}\
  \frac{K c}{\alpha}
  \ \mathop{=}^\text{\eqref{eq-proof-Mc}}\
  \frac{K-1}{2\alpha}-\frac{K-3}{2} f_3(\vec{v})
\]
Combining this with~\eqref{eq-proof-c-sum} gives:
\[
  \frac{K-1}{2\alpha}
    \ \ge\
  \frac{3K-9}{2} f(\vec{v}) - \frac{K-1}{2} \alpha f_5(\vec{v})
    \ \mathop{\ge}^\text{Lemma~\ref{lem-bound-on-f5}}\
  \frac{K-3}{2}\cdot \frac{1}{9} - \frac{K-1}{2}\cdot \frac{1}{216}
\]
This implies
\[
  \alpha \le \frac{216(K-1)}{23K-71} < 19.7
\]
Since $\alpha=24$, this leads to a contradiction as desired.
\qed

\section{Conclusions} \label{sec-conclusion}

In this paper we have proved the Herman-Protocol Conjecture formulated by
McIver and Morgan in~\cite{McIverMorgan} a decade ago, which says that the
worst-case initial configuration consists of three maximally-separated tokens,
for $N$~multiple of~$3$. This follows from our result that the worst-case
self-stabilization time is at most $\frac{4}{27}N^2$, for any number of
processes~$N$ and any odd number of tokens~$K$.


The proof uses a Lyapunov function approach. To do so, we first find a
suitable Lyapunov function and then show that its maximum
is~$\frac{4}{27} N^2$. Then we show that this function gives an upper
bound for the self-stabilization time for \emph{each} possible
configuration in Herman's algorithm. 



\subparagraph*{Acknowledgements}

Stefan Kiefer is supported by a University Research Fellowship of the Royal Society and by EPSRC grant EP/M003795/1.
Jo\"el Ouaknine is supported by ERC grant AVS-ISS (648701).

\bibliographystyle{plain} 
\bibliography{db}

\begin{thebibliography}{10}

\bibitem{AF02}
D.~Aldous and J.~A. Fill.
\newblock Reversible {M}arkov chains and random walks on graphs, 2002.
\newblock Unfinished monograph, recompiled 2014, available at
  \url{http://www.stat.berkeley.edu/~aldous/RWG/book.html}.

\bibitem{Arratia-Coal-and-Anni81}
R.~Arratia.
\newblock Limiting point processes for rescalings of coalescing and
  annihilating random walks on ${Z}^d$.
\newblock {\em The Annals of Probability}, 9(6):909--936, 1981.

\bibitem{Balding88}
D.~Balding.
\newblock Diffusion-reaction in one dimension.
\newblock {\em J.\ Appl.\ Prob.}, 25:733--743, 1988.

\bibitem{PrismHerman}
PRISM case studies.
\newblock Randomised self-stabilising algorithms.
\newblock \\
  \verb|http://www.prismmodelchecker.org/casestudies/self-stabilisation.php|.

\bibitem{Cooper-Coalescing}
C.~Cooper, R.~Els\"{a}sser, H.~Ono, and T.~Radzik.
\newblock Coalescing random walks and voting on graphs.
\newblock In {\em Proc. PODC}, pages 47--56. ACM, 2012.

\bibitem{Coppersmith93}
D.~Coppersmith, P.~Tetali, and P.~Winkler.
\newblock Collisions among random walks on a graph.
\newblock {\em SIAM Journal on Discrete Mathematics}, 6(3):363--374, 1993.

\bibitem{Cox-Coal-89}
J.T. Cox.
\newblock Coalescing random walks and voter model consensus times on the torus
  in ${Z}^d$.
\newblock {\em The Annals of Probability}, 17(4):1333--1366, 1989.

\bibitem{Endre}
E.~Cs\'{o}ka and S.~M\'{e}sz\'{a}ros.
\newblock Generalized solution for the {H}erman protocol conjecture.
\newblock Technical report, arxiv.org, 2015.
\newblock Available at \verb|http://arxiv.org/abs/1504.06963|.

\bibitem{Gen68}
P.-G. de~Gennes.
\newblock Soluble model for fibrous structures with steric constraints.
\newblock {\em J. Chem. Phys.}, 48(5):2257--2259, 1968.

\bibitem{D74}
E.~W. Dijkstra.
\newblock Self-stabilizing systems in spite of distributed control.
\newblock {\em Comm. ACM}, 17(11):643--644, 1974.

\bibitem{Dolev}
S.~Dolev.
\newblock {\em Self-Stabilization}.
\newblock MIT Press, 2000.

\bibitem{FZ13}
Y.~Feng and L.~Zhang.
\newblock {A Tighter Bound for the Self-Stabilization Time in {H}erman's
  Algorithm}.
\newblock {\em Inf. Process. Lett.}, 113(13):486--488, 2013.

\bibitem{FengZhang14}
Y.~Feng and L.~Zhang.
\newblock A nearly optimal upper bound for the self-stabilization time in
  {H}erman's algorithm.
\newblock {\em Dist. Comp.}, pages 1--12, 2015.

\bibitem{Fis84}
M.~E. Fisher.
\newblock Walks, walls, wetting, and melting.
\newblock {\em J. Stat. Phys.}, 34(5-6):667--729, 1984.

\bibitem{Fribourg05}
L.~Fribourg, S.~Messika, and C.~Picaronny.
\newblock Coupling and self-stabilization.
\newblock {\em Dist. Comp.}, 18:221--232, 2005.

\bibitem{GP06}
S.~Y. Grigoriev and V.~B. Priezzhev.
\newblock Random walk of annhilating particles on the ring.
\newblock {\em Theor. Math. Phys.}, 146(3):411--420, 2006.

\bibitem{Haslegrave14}
J.~Haslegrave.
\newblock Bounds on {H}erman's algorithm.
\newblock {\em Theoretical Computer Science}, 550:100--06, 2014.

\bibitem{Herman90}
T.~Herman.
\newblock Probabilistic self-stabilization.
\newblock {\em Inf. Process. Lett.}, 35(2):63--67, 1990.

\bibitem{IsraeliJalfon}
A.~Israeli and M.~Jalfon.
\newblock Token management schemes and random walks yield self-stabilizing
  mutual exclusion.
\newblock In {\em Proc. PODC}, pages 119--131. ACM, 1990.

\bibitem{KMO11}
S.~Kiefer, A.~Murawski, J.~Ouaknine, J.~Worrell, and L.~Zhang.
\newblock On stabilization in {H}erman's algorithm.
\newblock In {\em Proc. ICALP}, volume 6756 of {\em LNCS}. Springer, 2011.

\bibitem{Lam84}
L.~Lamport.
\newblock Solved problems, unsolved problems and non-problems in concurrency.
\newblock In {\em Proc. PODC}, pages 1--11. ACM, 1984.

\bibitem{McIverMorgan}
A.~McIver and C.~Morgan.
\newblock An elementary proof that {H}erman's ring is {$\Theta(N^2)$}.
\newblock {\em Inf. Process. Lett.}, 94(2):79--84, 2005.

\bibitem{Nakata05}
T.~Nakata.
\newblock On the expected time for {H}erman's probabilistic self-stabilizing
  algorithm.
\newblock {\em Theor. Comput. Sci.}, 349(3):475--483, 2005.

\bibitem{Oliveira-12}
R.I. Oliveira.
\newblock On the coalescence time of reversible random walks.
\newblock {\em Trans. Amer. Math. Soc.}, 364(4):2109--2128, 2012.

\bibitem{RS11}
J.~Rambeau and G.~Schehr.
\newblock Distribution of the time at which {$N$} vicious walkers reach their
  maximal height.
\newblock {\em Phys. Rev. E}, 83, 2011.

\bibitem{SMC13}
G.~Schehr, S.~N. Majumdar, A.~Comtet, and P.~J. Forrester.
\newblock Reunion probability of {$N$} vicious walkers: Typical and large
  fluctuations for large {$N$}.
\newblock {\em J. Stat. Phys.}, 150:491--530, 2013.

\bibitem{War91}
M.~Warner.
\newblock Aggregation in dense solutions of rods.
\newblock {\em J. Chem. Soc. Faraday. Trans.}, 87(6):861--867, 1991.

\end{thebibliography}

\iftechrep{
\newpage
\appendix

\appendix
\section{Proof of Lemma~\ref{lem-incr-f5}}
\label{sec:incr-f5-proof}

Let $z:\{1,\ldots,K\}\rightarrow\{1,\ldots,N\}$ be a $K$-token
configuration on a ring with $N$ processes.  Recall that the
associated gap vector $\g(z) = (g_0,\ldots,g_{K-1}) \in \mathbb{N}^K$ is
defined by $g_0:=N+z(1)-z(K)$ and $g_i:=z(i+1)-z(i)$ for
$i=1,\ldots,K-1$.

Given $z$, consider the \emph{gap-increment vector}
$\Delta:=\g(z')-\g(z)$, where $z'$ is the random successor
configuration of $z$.  This is a random variable taking values in
$\{-1,0,+1\}^K$ where, for each $i\in\{0,\ldots,K-1\}$, $\Delta_i=0$
with probability $1/2$ (the two tokens adjacent to the $i$-th gap both
stay or both move clockwise), and $\Delta_i = \pm 1$ with probability
$1/4$ (one token stays and the other moves clockwise).

We will need the following two properties (\ref{eq:property1}) and
(\ref{eq:property2}) concerning the expectation of the random
variable $\Delta$.  First, it is straightforward
to verify by direct calculation that for $0\leq k< K$,
\begin{gather}
  \E (\Delta_i\Delta_{i+1}\ldots \Delta_{i+k}) =
\left\{ \begin{array}{ll}
        0 & \mbox{if $k$ is even}\\
        \left(-\frac{1}{4}\right)^{(k+1)/2} & \mbox{if $k$ is odd}
\end{array} \right .
\label{eq:property1} 
\end{gather}
Secondly, suppose that $0\leq i_1 \leq i_2 < i_3 \leq i_4 < K$, with
$i_3 \not\equiv i_2+1$ and $i_1 \not\equiv i_4+1$ modulo $K$, that is,
$\{i_1,\ldots,i_2\}$ and $\{i_3,\ldots,i_4\}$ form two non-adjacent
intervals (treating $K-1$ and $0$ as adjacent).  Then
\begin{gather}
 \E(\Delta_{i_1}\ldots \Delta_{i_2}
     \Delta_{i_3}\ldots \Delta_{i_4})
  =  \E(\Delta_{i_1}\ldots \Delta_{i_2})
     \E(\Delta_{i_3}\ldots \Delta_{i_4}) \, .
\label{eq:property2}
\end{gather}
because $\Delta_{i_1},\ldots,\Delta_{i_2}$ and
$\Delta_{i_3},\ldots,\Delta_{i_4}$ are determined by the movements of
disjoints sets of tokens, and hence are independent.

For a given configuration $z$ we want to compute $\E
[f_5(\g(z)+\Delta)]$.  From the definition of $f_5$ and the linearity
of expectation, this is a sum of expressions of the form
\begin{align}
  \E
    (g_{i_0}+\Delta_{i_0})
    (g_{i_1}+\Delta_{i_1})
    (g_{i_2}+\Delta_{i_2})
    (g_{i_3}+\Delta_{i_3})
    (g_{i_4}+\Delta_{i_4})
\label{eq:expand}
\end{align}
over the set of indices $0\leq i_0<i_1<i_2<i_3<i_4<K$ of alternating parity.

Expression (\ref{eq:expand}) evaluates to a degree-5 polynomial in the
variables $\g$.  Observe that this polynomial has no monomials of even degree.
For example,
  all degree-2 monomials have coefficients of the form $\E(\Delta_i\Delta_j\Delta_k)$
  with $i<j<k$.
These coefficients are zero by \eqref{eq:property1}~and~\eqref{eq:property2}.
Degrees 0 and 4 are proved similarly.

There is a single degree-5 monomial in (\ref{eq:expand})---namely
$g_{i_0}\ldots g_{i_4}$.  Summing all such terms over indices $0\leq
i_0<i_1<i_2<i_3<i_4<K$ of alternating parity yields $f_5(\g(z))$.

Expanding the expression (\ref{eq:expand}) yields degree-3 monomials
of the form \[g_{j_0}g_{j_1}g_{j_2}\E(\Delta_{j_3}\Delta_{j_4})\] for
distinct indices $j_0<j_1<j_2$.  The coefficient of such a term is
$-1/4$ if $j_4 \equiv j_3+1$ or $j_3\equiv j_4+1$ and $0$ otherwise.
Moreover, if $j_0,j_1,j_2$ have alternating parity there are $(K-3)/2$
choices of $j_3$ such that
$g_{j_0}g_{j_1}g_{j_2}\E(\Delta_{j_3}\Delta_{j_3+1})$ appears in
(\ref{eq:expand}).
If $j_0,j_1,j_2$ do not have alternating parity
then there are no such terms in (\ref{eq:expand}).  We conclude that
the sum of all degree-3 monomials in $\E(f_5(\g(z)+\Delta)$ is
\[ -\frac{(K-3)}{8} f_3(\g(z)) \, . \]

Finally, consider the degree-1 monomials.  These have the form
\[g_{j_0}\E(\Delta_{j_1}\Delta_{j_2}\Delta_{j_3}\Delta_{j_4})\] for
distinct indices $j_0$ and $j_1<j_2<j_3<j_4$.  By Property
(\ref{eq:property2}), such terms are only non-zero if
$\{j_1,j_2,j_3,j_4\}$ comprises either a single block of adjacent
indices or two non-adjacent blocks of length $2$ (considering $K-1$
and $0$ to be adjacent).  In this case
$\E(\Delta_{j_1}\Delta_{j_2}\Delta_{j_3}\Delta_{j_4})= 1/16$, and
there are $\binom{(K-1)/2}{2}=(K-1)(K-3)/8$ such choices of
$\{j_1,j_2,j_3,j_4\}$ for each choice of $j_0$.  Thus $g_{j_0}$ has
total coefficient $(K-1)(K-3)/128$ in $\E(f_5(\g(z)+\Delta))$.
Moreover, since $g_0+\ldots+g_{K-1}=N$, the degree-1 terms in
$\E(f_5(\g(z)+\Delta))$ sum to $N(K-1)(K-3)/128$.

In summary, we have proved:
\begin{proposition}
For each $K$-token configuration $z$,
\[ \E(f_5(\g(z)+\Delta)) = f_5(\g(z)) - \frac{K-3}{8}f_3(\g(z)) +
\frac{(K-1)(K-3)N}{128} \, .\]
\label{prop:lemma8}
\end{proposition}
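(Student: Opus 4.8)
The plan is to compute $\E(f_5(\g(z)+\Delta))$ by expanding each degree-$5$ product over the index set of alternating parity, exploiting linearity of expectation, and collecting monomials by their total degree in~$\g$. The starting point is to record the two key facts about the gap-increment vector~$\Delta \in \{-1,0,+1\}^K$: namely~\eqref{eq:property1}, that a product $\E(\Delta_i \Delta_{i+1} \cdots \Delta_{i+k})$ of consecutive increments vanishes for even~$k$ and equals $(-1/4)^{(k+1)/2}$ for odd~$k$; and~\eqref{eq:property2}, that increments over two non-adjacent index intervals factor as a product of expectations because they are governed by disjoint sets of tokens. These follow by direct calculation from the description of one step of the protocol (each token independently keeps or advances with probability~$1/2$), and they immediately imply that $\E(\Delta_{j_1}\cdots\Delta_{j_m})$ is nonzero only when the indices $\{j_1,\ldots,j_m\}$ decompose into blocks of consecutive indices, each block of even length.

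First I would observe that, after expansion, no monomial of even total degree survives: every such monomial carries a coefficient $\E(\Delta_{j_1}\cdots\Delta_{j_m})$ with an odd number~$m$ of distinct (hence, since they come from a strictly increasing alternating-parity tuple, pairwise non-adjacent or forming odd-length runs) indices, and by the block characterization such an expectation is zero. That kills degrees~$0$, $2$, and~$4$. Then I would treat the three remaining degrees in turn. Degree~$5$: the unique top monomial of each product is $g_{i_0}\cdots g_{i_4}$ with coefficient~$1$, and summing over all alternating-parity tuples reproduces exactly $f_5(\g(z))$. Degree~$3$: a surviving term has the shape $g_{j_0}g_{j_1}g_{j_2}\,\E(\Delta_{j_3}\Delta_{j_4})$ with $j_0<j_1<j_2$ coming from the retained variables and $\{j_3,j_4\}$ a consecutive pair, contributing coefficient $-1/4$; one checks that $j_0,j_1,j_2$ must themselves be of alternating parity for such a term to appear in some product, and that there are exactly $(K-3)/2$ admissible positions for the adjacent pair $\{j_3,j_3+1\}$, so the degree-$3$ part equals $-\tfrac{K-3}{8} f_3(\g(z))$.

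The degree-$1$ part is the most delicate bookkeeping step, and I expect it to be the main obstacle. Here a term is $g_{j_0}\,\E(\Delta_{j_1}\Delta_{j_2}\Delta_{j_3}\Delta_{j_4})$ where $\{j_1,j_2,j_3,j_4\}$ must, by the block characterization, be either a single run of four consecutive indices or two non-adjacent runs of length~$2$ (with $K-1$ and~$0$ treated as adjacent); in either case the expectation is $(1/16)$ by~\eqref{eq:property1}–\eqref{eq:property2}. The count of such $4$-subsets is $\binom{(K-1)/2}{2} = (K-1)(K-3)/8$ for each fixed~$j_0$, giving each $g_{j_0}$ total coefficient $(K-1)(K-3)/128$; since $g_0+\cdots+g_{K-1}=N$, the degree-$1$ part sums to $N(K-1)(K-3)/128$. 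Assembling the three pieces yields Proposition~\ref{prop:lemma8}, and then Lemma~\ref{lem-incr-f5} follows immediately by scaling: multiply through by $4N^2$, substitute $\g(z) = N\cdot(\g(z)/N)$ using homogeneity of $f_5$ (degree~$5$) and $f_3$ (degree~$3$), so that $V_5(z) = 4N^2 f_5(\g(z)/N) = \tfrac{4}{N^3}f_5(\g(z))$ and similarly for $f_3$, which converts the three coefficients into $V_5(z)$, $-\tfrac12(K-3)f_3(\g(z)/N)$, and $\tfrac{1}{32}(K-1)(K-3)/N^2$ respectively. The care needed is entirely in the combinatorial counting of admissible index patterns in the degree-$1$ and degree-$3$ collections; the parity constraints on which run lengths can occur, and the interaction with the ``wrap-around'' adjacency of~$0$ and~$K-1$, are where an argument can easily go wrong.
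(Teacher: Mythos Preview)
Your proposal is correct and follows essentially the same route as the paper: expand each product in~\eqref{eq:expand}, use properties \eqref{eq:property1}--\eqref{eq:property2} to kill the even-degree monomials, then count the surviving degree-$5$, degree-$3$, and degree-$1$ contributions exactly as you describe (including the $\binom{(K-1)/2}{2}$ count for the degree-$1$ part and the $(K-3)/2$ count for degree~$3$). The only cosmetic slip is the parenthetical ``pairwise non-adjacent or forming odd-length runs'': subsets of an alternating-parity $5$-tuple can form even-length runs too, but your actual argument---an odd number of $\Delta$-factors cannot decompose into even-length blocks---does not rely on that remark.
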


Lemma~\ref{lem-incr-f5} follows immediately from
Proposition~\ref{prop:lemma8} by scaling, since $V_5(z) = 4N^2
f_5(\g(z)/N)=\frac{4}{N^3}f_5(\g(z))$ is a linear multiple of
$f_5(\g(z))$.

\section{Proofs of Properties of an Interior Local Maximum} \label{app-derivatives}

\begin{qlemma}{\ref{lem-derivatives}}
Let $\vec{v}$ be a local maximum of~$f^{(K)}$ in the interior of~$D^{(K)}$ and define $c \in \R$ by
\begin{align} 
  c \ \ =
    \sum_{\substack{1<i_2<K\\\text{$i_2$ even}}} v_{i_2}
   \  - \quad
   \alpha   \hskip-1.5em
    \sum_{\substack{1<i_2<i_3<i_4<K\\\text{$i_2$, $i_4$ even}\\\text{$i_3$ odd}}}
    \hskip-1.5em
       v_{i_2} v_{i_3} v_{i_4}\,.
  \label{eq-proof-c-def_p}
  \end{align}
This expression holds for the same value of~$c$ if the indices are rotated by an arbitrary~$k$:
for all~$j$ the index~$i_j$ becomes $(i_j + k) \bmod K$. Further, we have 
\begin{align}
  \sum_{\substack{3\le i_3<i_4<K\\\text{$i_3$ odd}\\\text{$i_4$ even}}} \!\!\!\! v_{i_3} v_{i_4}
  \le
  \frac{1}{\alpha}
  \label{eq-d2-nonpos_p}
\end{align}
Again, this inequality also holds when indices are rotated.
\end{qlemma}

\begin{proof}
We consider the second-order Taylor expansion of~$f^{(K)}$ along the direction $\vec{d}=(-1,0,1,0,\ldots,0)$
 (which is tangent to~$D^{(K)}$):
\[
  f^{(K)}(\vec{x}+\epsilon\vec{d}) =
    f^{(K)}(\vec{x}) + \epsilon Q(\vec{x}) + \epsilon^2 R(\vec{x}) + O(\epsilon^3)
  \;.
\]
Since $\vec{v}$ is a local maximum, we have $Q(\vec{v})=0$ and $R(\vec{v})\le 0$.
Proving~\eqref{eq-proof-c-def_p} boils down to calculating~$Q(\vec{x})$;
  proving~\eqref{eq-d2-nonpos_p} boils down to calculating~$R(\vec{x})$.

\medskip
First, we prove~\eqref{eq-proof-c-def_p}.
Let
\[
  f^{(K)}(x_0+\epsilon,x_1,\ldots,x_{K-1}) =
    f^{(K)}(\vec{x}) + \epsilon P(\vec{x}) + O(\epsilon^2).
\]
By the chain rule and using the rotational symmetry of~$f^{(K)}$ (Lemma~\ref{lem-f-properties}(a)), we find that
\[
  Q(x_0,\ldots,x_{K-1})=P(x_{0+2},x_{1+2},\ldots,x_{K-1+2})-P(x_0,x_1,\ldots,x_{K-1}).
\]
Now recall the definition of~$f^{(K)}$:
\begin{equation}
\label{fmnew}
  f^{(K)}(\vec{x}) :=
    \hskip-1em
    \sum_{\substack{0\le i_0<i_1<i_2<K\\\text{$i_2-i_1$ and $i_1-i_0$ odd}}}
    \hskip-2em
      x_{i_0} x_{i_1} x_{i_2}
    -
    \hskip-2em
    \sum_{\substack{0\le i_0<\cdots<i_4<K\\\text{$i_4-i_3,\ldots,i_1-i_0$ all odd}}}
    \hskip-2em
      \alpha
      x_{i_0} x_{i_1} x_{i_2} x_{i_3} x_{i_4}
\end{equation}
We differentiate $f^{(K)}$ with respect to $x_0$ to obtain
\begin{align}
  P(\vec{x}) =
    \sum_{\substack{0<i_1<i_2<K\\\text{$i_1$ odd}\\\text{$i_2$ even}}}
    \hskip-1em
      x_{i_1} x_{i_2}
    -
    \hskip-1em
    \sum_{\substack{0<i_1<\cdots<i_4<K\\\text{$i_1,i_3$ odd}\\\text{$i_2,i_4$ even}}}
    \hskip-2em
      \alpha
      x_{i_1} x_{i_2} x_{i_3} x_{i_4}.
\end{align}
Since $\vec v$~is a local maximum, $Q(\vec{v})=0$ and we have
\[
    \sum_{\substack{0<i_1<i_2<K\\\text{$i_1$ odd}\\\text{$i_2$ even}}}
    \hskip-1em
      v_{i_1} v_{i_2}
    -
    \hskip-1em
    \sum_{\substack{0<i_1<\cdots<i_4<K\\\text{$i_1,i_3$ odd}\\\text{$i_2,i_4$ even}}}
    \hskip-2em
      \alpha
      v_{i_1} v_{i_2} v_{i_3} v_{i_4}
=
    \sum_{\substack{0<i_1<i_2<K\\\text{$i_1$ odd}\\\text{$i_2$ even}}}
    \hskip-1em
      v_{i_1+2} v_{i_2+2}
    -
    \hskip-1em
    \sum_{\substack{0<i_1<\cdots<i_4<K\\\text{$i_1,i_3$ odd}\\\text{$i_2,i_4$ even}}}
    \hskip-2em
      \alpha
      v_{i_1+2} v_{i_2+2} v_{i_3+2} v_{i_4+2}
\]
Observe that the monomials not containing $v_1$ cancel each other out.
Dividing by $v_1=v_{1+K}$ (since $v_1>0$), we have
\[
    \sum_{\substack{1<i_2<K\\\text{$i_2$ even}}}
    \hskip-1em
      v_{i_2}
    -
    \hskip-1em
    \sum_{\substack{1<i_2<i_3<i_4<K\\\text{$i_3$ odd}\\\text{$i_2,i_4$ even}}}
    \hskip-2em
      \alpha
      v_{i_2} v_{i_3} v_{i_4}
=
    \sum_{\substack{0<i_1<K-1\\\text{$i_1$ odd}}}
    \hskip-1em
      v_{i_1+2}
    -
    \hskip-1em
    \sum_{\substack{0<i_1<i_2<i_3<K-1\\\text{$i_1,i_3$ odd}\\\text{$i_2$ even}}}
    \hskip-2em
      \alpha
      v_{i_1+2} v_{i_2+2} v_{i_3+2}
\]
Now we observe that the right hand side can be obtained from the left hand side
  by changing each index~$i$ into~$i+1$.
Taking into account rotations of the above equality, we conclude~\eqref{eq-proof-c-def_p}.

\medskip
Next, we prove~\eqref{eq-d2-nonpos_p}. To do so, we first calculate the terms of order $\epsilon^2$
  of $f^{(K)}(\vec{x}+\epsilon\vec{d})$.
Such terms occur only when $i_0=0$,\ $i_1=1$, and $i_2=2$.
In this case,
  the first sum reduces to $(x_0-\epsilon)x_1(x_2+\epsilon)$,
  and the second sum reduces to
\[
  \alpha
  (x_0-\epsilon)x_1(x_2+\epsilon)
  \hskip-1em
  \sum_{\substack{2<i_3<i_4<K\\\text{$i_3$ odd, $i_4$ even}}}
  \hskip-1em
    x_{i_3} x_{i_4}.
\]
Thus,
\[
  R(\vec{x}) =
  -x_1
  +\alpha x_1
  \hskip-1em
  \sum_{\substack{2<i_3<i_4<K\\\text{$i_3$ odd, $i_4$ even}}}
  \hskip-1em
    x_{i_3} x_{i_4}.
\]
Since the assumed interior local maximum $\vec{v}$ is in the interior of~$D^{(K)}$,
  we have that $v_1>0$,
  and so the condition $R(\vec{v})\le0$ is equivalent to $R({\vec{v}})/v_1\le0$:
\[
  -1 +
  \hskip-1em
  \sum_{\substack{2<i_3<i_4<K\\\text{$i_3$ odd, $i_4$ even}}}
  \hskip-1.5em
    \alpha v_{i_3} v_{i_4}
  \le0
\]
Up to trivial rearrangement, we obtained~\eqref{eq-d2-nonpos_p}.
\end{proof}

\begin{qcorollary}{\ref{cor-d2-nonpos-summed}}
\stmtcordtwononpossummed
\end{qcorollary}

\begin{proof}
We sum \eqref{eq-d2-nonpos_p} over all $K$ rotations:
\begin{align*}
  \frac{K}{\alpha}
    &\ge \sum_{i=0}^{K-1} \sum_{\substack{3\le i_3<i_4< K\\\text{$i_3$ odd}\\\text{$i_4$ even}}}
      \!\!\!v_{i_3+i} v_{i_4+i}
     = \sum_{\substack{3\le i_3<i_4< K\\\text{$i_3$ odd}\\\text{$i_4$ even}}} \sum_{i=0}^{K-1}
      v_{i_3+i} v_{i_4+i}
     = \sum_{\substack{3\le i_3<i_4< K\\\text{$i_3$ odd}\\\text{$i_4$ even}}}
        \!\!\! S_{i_4-i_3}(\vec{v}) \\
    &= \sum_{\substack{3\le i_3<i_3+i<K\\\text{$i$ odd, $i_3$ odd}}} \!\!\!\!S_i(\vec{v})
     = \sum_{\substack{1\le i\\\text{$i$ odd}}} \sum_{\substack{3\le i_3<K-i\\\text{$i_3$ odd}}}
       \!\!\!\!S_i(\vec{v})
     = \sum_{\substack{1\le i<K-2\\\text{$i$ odd}}} \frac{K-i-2}{2} S_i(\vec{v})
\end{align*}
\end{proof}

\section{Proofs of Combinatorial Lemmas}\label{app-combin-proofs}

We repeat the combinatorial facts of Section~\ref{sub-combinatorics},
  this time with proofs.

\begin{qlemma}{\ref{lem-sum-combin-simple}}
\stmtlemsumcombinsimple
\end{qlemma}

\begin{proof}
Let us fix $0\le i_0<i_1<i_2<K$ such that both $i_2-i_1$ and $i_1-i_0$ are odd.
We want to show that the term $x_{i_0}x_{i_1}x_{i_2}$
  occurs $(K-3)/2$ times in each side of the equality.
The middle and right sides are trivial; it remains to check the left side.
Let us now fix an arbitrary~$k$.
For a term on the left hand side to equal $x_{i_0}x_{i_1}x_{i_2}$,
  it must be that the sets $\{(i'_0+k)\bmod K, (i'_1+k)\bmod K, (i'_2+k)\bmod K\}$
  and $\{i_0,i_1,i_2\}$ are equal.
In other words, once $i_0,i_1,i_2,k$ are fixed,
  the set $\{i'_0,i'_1,i'_2\}$ is uniquely determined.
Since, $i'_0<i'_1<i'_2$, the potential values of $i'_0,i'_1,i'_2$ are also uniquely determined.
The remaining question is for how many $k\in\{0,\ldots,K-1\}$
  it is the case that the values $i'_0,i'_1,i'_2$ so determined obey
  the other constraints.

There are three disjoint cases:
The smallest value in the set $\{i'_0,i'_1,i'_2\}$, namely~$i'_0$, is
  $(i_0-k)\bmod K$ or $(i_1-k)\bmod K$ or $(i_2-k)\bmod K$.
The case $i'_0=(i_1-k)\bmod K$ occurs exactly when
  (a)~$i_0<k<i_1$, and (b)~$k$~has the same parity as~$i_1$.
Let $\delta_0$ denote the size of the gap between $i_0$~and~$i_1$.
Then, there are $(\delta_0-1)/2$ values of $k$ that obey both (a)~and~(b).
The other two cases are similar,
  and so we conclude that the term $x_{i_0}x_{i_1}x_{i_2}$
  occurs on the left hand side
\[
  \frac{\delta_0-1}{2}+\frac{\delta_1-1}{2}+\frac{\delta_2-1}{2} = \frac{K-3}{2}
\]
times.
\end{proof}

\begin{qlemma}{\ref{lem-drop-terms}}
\stmtlemdropterms
\end{qlemma}

\begin{proof}
The proof below is a case analysis of where $i_1$ can be inserted in-between
  $0<i_2<i_3<i_4<K$.
As noted before, the task is to show that retaining the terms that occur in $f_5$
  gives a lower bound.
In other words, we want to show that those terms not occurring in $f_5$ have a positive sum.
We calculate this sum:
\begin{equation*}
\def\1{\hskip-1em}
\def\2{\hskip-2em}
\begin{aligned}
  &\sum_{\substack{1<i_2<i_1\\\text{$i_2$ even}}} v_0 v_{i_2} v_{i_1}
  -\1\sum_{\substack{1<i_2<i_3<i_4<K\\i_2<i_1}}\2 \alpha v_0 v_{i_2} v_{i_1} v_{i_3} v_{i_4} \\
=\;
  &\sum_{\substack{1<i_2<i_1\\\text{$i_2$ even}}} v_0 v_{i_2} v_{i_1}
  \Bigl(
  1-
  \1\sum_{\substack{i_2<i_3<i_4<K\\\text{$i_3$ odd, $i_4$ even}}}\1 \alpha v_{i_3} v_{i_4}
  \Bigr) \\
\ge\;
  &\sum_{\substack{1<i_2<i_1\\\text{$i_2$ even}}} v_0 v_{i_2} v_{i_1}
  \Bigl(
  1-
  \1\sum_{\substack{3\le i_3<i_4<K\\\text{$i_3$ odd, $i_4$ even}}}\1 \alpha v_{i_3} v_{i_4}
  \Bigr) \quad\mathop{\ge}^\text{by~\eqref{eq-d2-nonpos}}\quad 0
\end{aligned}
\end{equation*}
\end{proof}

\begin{qlemma}{\ref{lem-fancy-sum}}
\stmtlemfancysum
\end{qlemma}

The proof is similar to that of Lemma~\ref{lem-sum-combin-simple}.

\begin{proof}
Let us fix $0\le i_0 <\cdots<i_{l-1}<K$ with odd gaps in-between.
We want to show that the term $x_{i_0}\ldots x_{i_{l-1}}$
  occurs $(l-1)K/2-l$ times on each side of the equation.
For the right side, it is trivial.
The general form of a term on the left side is
$
  x_k x_{i'_1+k} x_{i'_2+k} \ldots x_{i'_{l-1}+k}
$.
It must be that $k$ is one of $i_0,\ldots,i_{l-1}$.
Let us consider the case $k=i_0$; the others are similar.
If $k=i_0$, then, in fact,
\[
  (k, i'_1+k, i'_2+k,\ldots,i'_{l-1}+k) \bmod K \quad=\quad (i_0,i_1,\ldots,i_{l-1}).
\]
In particular, $i'_1$ equals the size of the gap between $i_0$~and~$i_1$.
Let us denote this gap by~$\delta_0$.
On the left hand side, the term is multiplied by $(K-i'_1-2)/2$,
  which is $(K-\delta_0-2)/2$.
The cases $k=i_1$, $k=i_2$, \dots are similar.
Because $\delta_0+\cdots+\delta_{l-1}=K$,
  we conclude that the term $x_{i_0}\ldots x_{i_{l-1}}$ occurs
\[
  \frac{K-\delta_0-2}{2} + \cdots + \frac{K-\delta_{l-1}-2}{2}
  = \frac{l-1}{2}K-l
\]
times on the left side.
\end{proof}

}{}

\end{document}